\newcommand{\eaglemine}{{EagleMine}{ }}
\newcommand{\Eaglemine}{{EagleMine}}
\newcommand{\pltname}{histogram\xspace}
\newcommand{\mcls}{micro-clusters\xspace}
\newcommand{\hide}[1]{}
\newcommand{\bit}{\begin{compactitem}}
\newcommand{\eit}{\end{compactitem}}
\newcommand{\ben}{\begin{compactenum}}
\newcommand{\een}{\end{compactenum}}
\newtheorem{trait}{Trait}
\newtheorem{problem}{Problem}
\newtheorem{definition}{Definition}
\renewcommand\labelenumi{\roman{enumi}}
\renewcommand\theenumi\labelenumi
\newcommand*\rotv{\rotatebox{90}}
\newcolumntype{L}{>{\centering\arraybackslash}m{3.2cm}}
\newcommand{\norm}[1]{\left\lVert#1\right\rVert}
\definecolor{light-gray}{gray}{0.95}
\definecolor{dark-gray}{rgb}{0.66, 0.66, 0.66}
\DeclareMathOperator*{\argmin}{arg\,min}
\begin{document}

\setlength{\abovedisplayskip}{0.1in}
\setlength{\belowdisplayskip}{0.1in}



\title{{EagleMine}: Vision-Guided Mining in Large Graphs}

\author{
Wenjie Feng$^{\dag}$, Shenghua Liu$^{\dag}$, Christos Faloutsos$^{\ddag}$, 
Bryan Hooi$^{\ddag}$, Huawei Shen$^{\dag}$, Xueqi Cheng$^{\dag}$ \\
$^{\dag}$CAS Key Laboratory of Network Data Science \& Technology,\\
Institute of Computing Technology, Chinese Academy of Sciences\\
$^{\ddag}$School of Computer Science, Carnegie Mellon University\\
\small{fengwenjie@software.ict.ac.cn, liu.shengh@gmail.com, christos@cs.cmu.edu} \\ 
\small{bhooi@andrew.cmu.edu, \{shenhuawei, cxq\}@ict.ac.cn}
\vspace{-0.15in}
}
\date{}

\maketitle






\algrenewcommand{\algorithmicrequire}{\textbf{Input:}}
\algrenewcommand{\algorithmicensure}{\textbf{Output:}}


\begin{abstract}
\label{sec:abs}
Given a graph with millions of nodes, what patterns exist in the distributions 
of node characteristics, and how can we detect them and separate 
anomalous nodes in a way similar to human vision?
In this paper, we propose a vision-guided algorithm,
\Eaglemine, to summarize micro-cluster patterns in two-dimensional histogram 
plots constructed from node features in a large graph.
\eaglemine utilizes 
a water-level tree to capture cluster structures according to 
vision-based intuition at multi-resolutions.
\eaglemine traverses the water-level tree from the root and adopts
statistical hypothesis tests to determine the optimal clusters that
should be fitted along the path, and summarizes each cluster with a truncated Gaussian distribution.
Experiments on real data show that our method can find truncated 
and overlapped elliptical clusters, even when some baseline methods
split one visual cluster into pieces with Gaussian spheres. 
To identify potentially anomalous micro-clusters, 
\eaglemine also a designates score to measure the suspiciousness of outlier 
groups (i.e. node clusters) and outlier nodes, detecting bots and anomalous 
users with high accuracy in the real Microblog data.

\end{abstract}

\section{Introduction}
\label{sec:intro}
Given real-world graphs with millions of nodes and connections,
how can we separate the nodes into groups and summarize 
the graph in an intuitive and persuasive way
being consistent with human vision? 
In particular, the graph can represent friendships in Facebook, 
rating or retweeting behaviors between users 
and objects in Amazon and Twitter.
A series of topics in human-computer interaction (HCI) 
study graph visualization by projecting 
the nodes into a two-dimensional plot to facilitate manual 
patterns recognition\cite{Kang2014NetRayVA,koutra2015perseus}.
By incorporating with visualization technologies and HCI tools, 
graph mining and clustering become more intuitive and interpretable. 
In addition, visualization-based methods are often more scalable 
compared with other approaches focusing on matrix or tensor 
analysis for big graphs as they operate on histograms 
rather than the raw data. 
However, there are hundreds of ways to visualize a static graph 
in a plot by selecting different characteristics, e.g.
degree, triangles, eigen/singular vectors, PageRank, 
and even more plots for snapshot graphs of a dynamic system. 
It is labor-consuming to manually recognize patterns 
from those plots.
How can we automatically find and summarize 
micro-clusters, interesting patterns, and anomalies in these plots 
for mining large graphs?

In this paper, we propose \Eaglemine, a novel tree-based
mining approach to find and summarize all the node clusters 
in a general \pltname plot of a graph.
Inspired by the human vision, 
\eaglemine discovers a hierarchy of clusters and
describes the node distribution in each cluster with a
model vocabulary. 
\eaglemine also summarizes interesting and anomalous node 
clusters besides the majority cluster in a graph.
In the experiments, we test \eaglemine on real data, and
publicly-available graphs.

In summary, the proposed \eaglemine has the
following desirable properties:

\begin{compactitem}
    \item \textit{Automated summarization:} 
	\eaglemine automatically summarizes a given histogram with
	a vocabulary of distributions, which separates 
	graph nodes into groups and outliers as human vision does.
    \item \textit{Effectiveness:}  \eaglemine 
	detects interpretable clusters, and outperforms 
	the baselines and even the manually tuned competitors. 
	(see Figure~\ref{fig:wbdbscan}-\ref{fig:wbeaglemine}), 
	obtaining better summarization performance.
	 For anomaly detection on real data, \eaglemine also achieves 
	 higher accuracy, finding micro-clusters of suspicious users 
	 and bots (see Figure~\ref{fig:sina_outdhub_cases}).
    \item \textit{Scalability:}
	\eaglemine can be easily extended and runs
	in linear-time in the number of graph nodes given 
	the visualization plot.
\end{compactitem}

\textbf{Reproducibility:} Our code is open-sourced 
at \url{\codeurl}, and most of the data we use 
is publicly available online.

\begin{figure*}[t]
	\centering
	\begin{subfigure}[t]{0.3\linewidth} 
		\centering
		\includegraphics[height=1.2in]{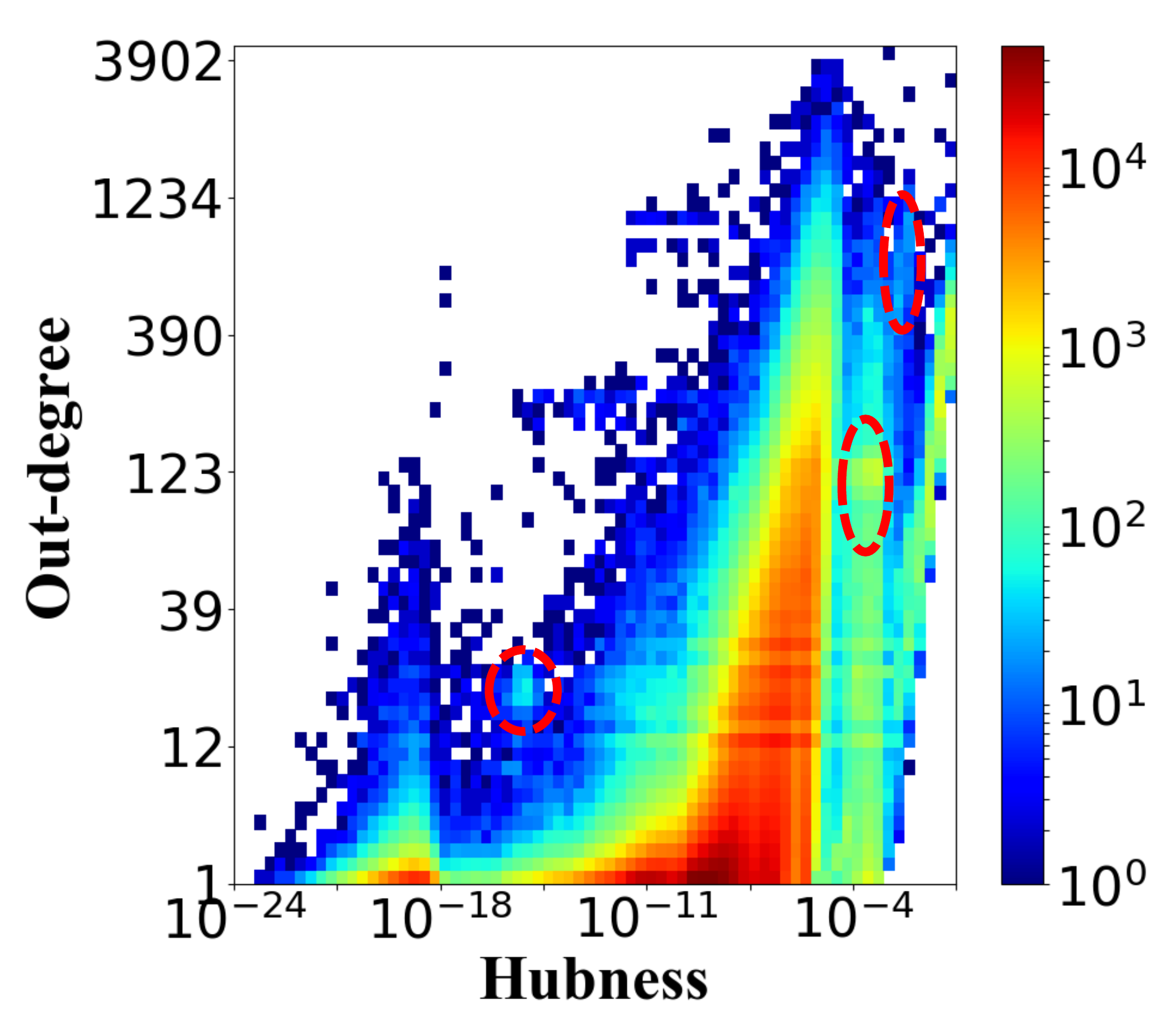}
		\vspace{-0.05in}
		\caption{Out-degree vs. Hubness}
		\label{fig:sina_outdhub}
	\end{subfigure} ~
	\begin{subfigure}[t]{0.25\linewidth} 
		\centering
		\includegraphics[height=1.2in]{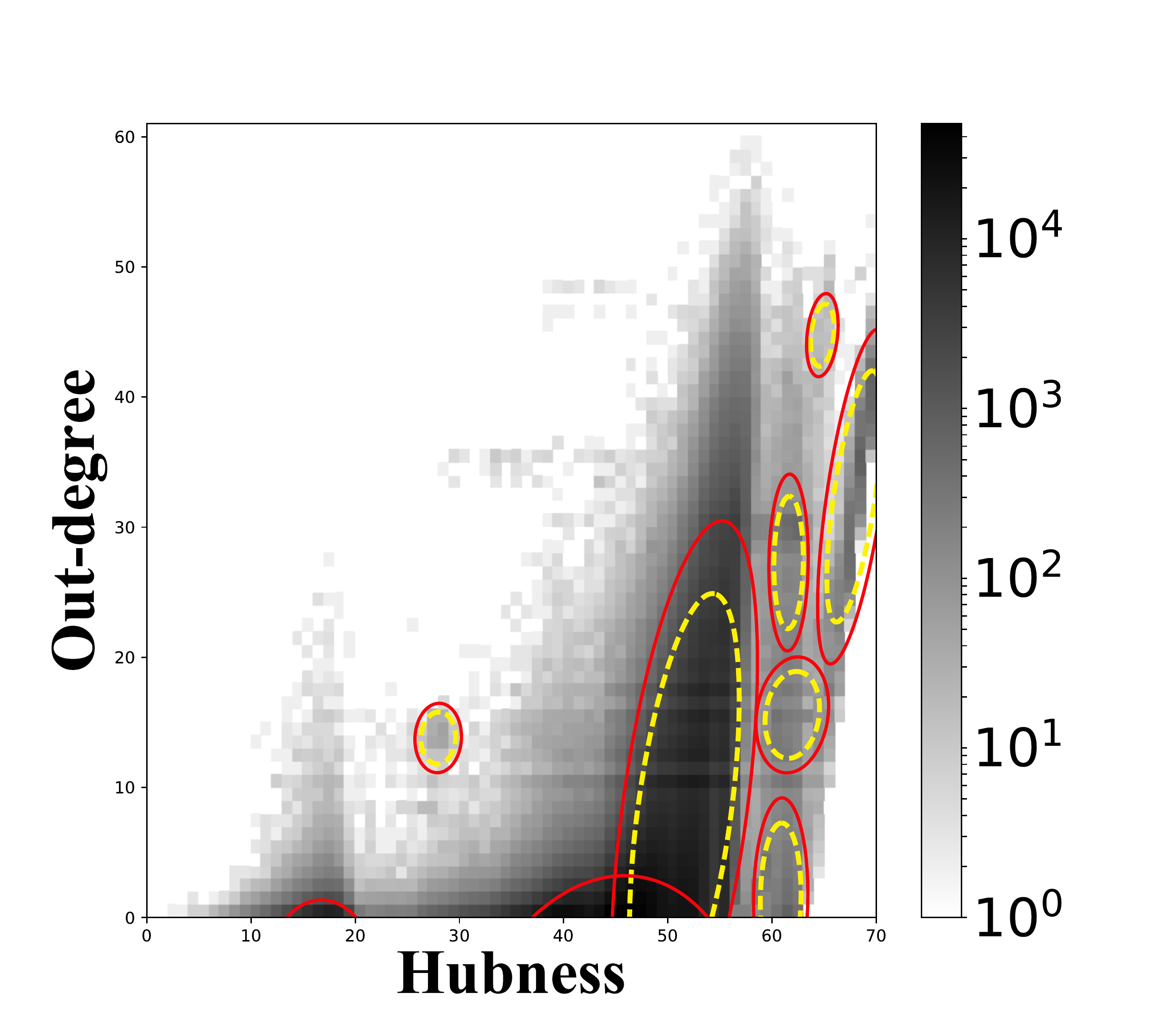}
		\vspace{-0.05in}
		\caption{Results from 
			\textbf{\Eaglemine}.}
		\label{fig:sina_outdhub_tnfit}
	\end{subfigure}  ~
	\begin{subfigure}[t]{0.375\linewidth} 
		\centering
		\includegraphics[height=1.2in]{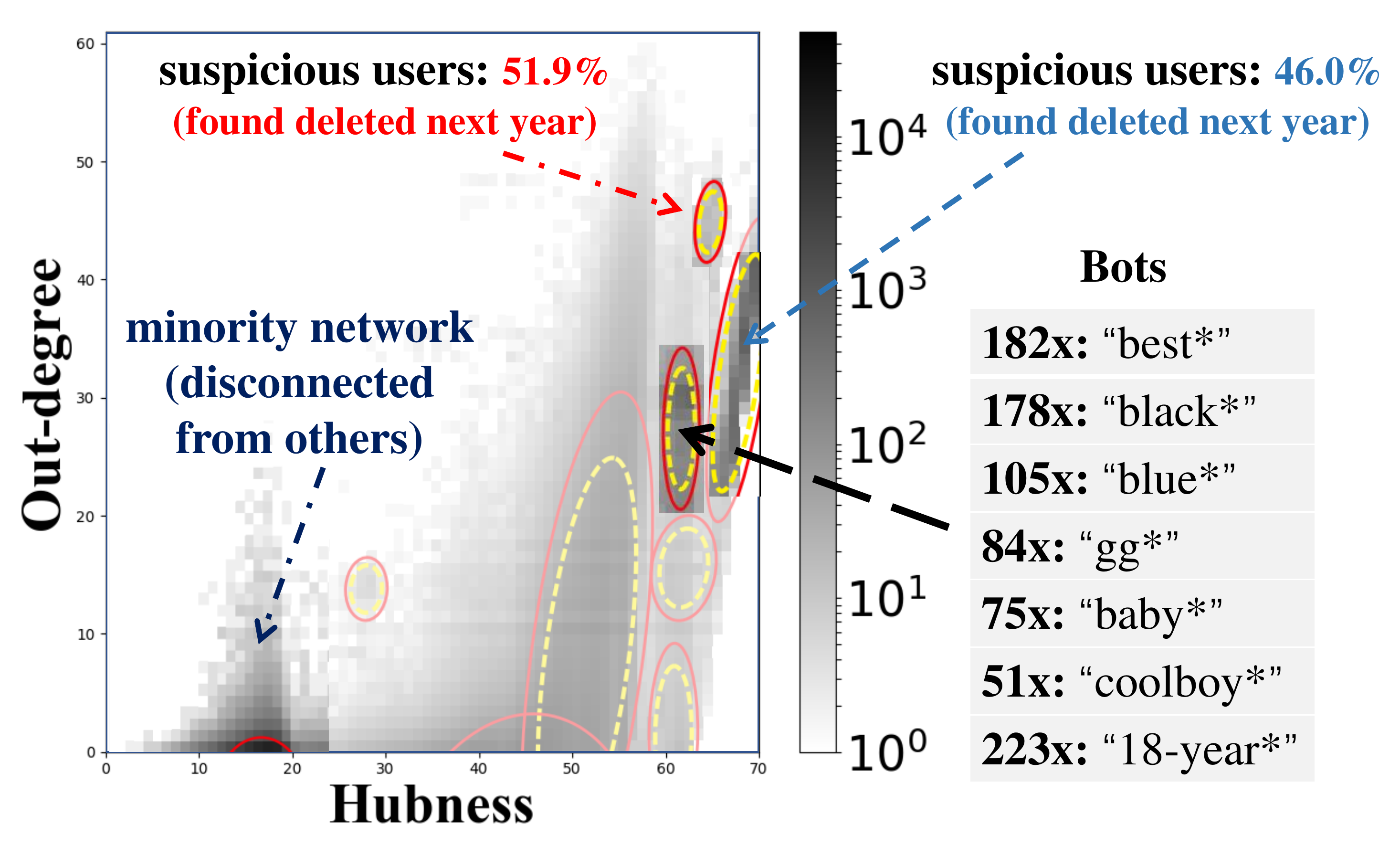}
		\vspace{-0.05in}
		\caption{Interesting and suspicious clusters.}
		\label{fig:sina_outdhub_cases}
	\end{subfigure}
	
	\begin{subfigure}[t]{0.16\linewidth}
		\centering
		\includegraphics[height=0.95in]{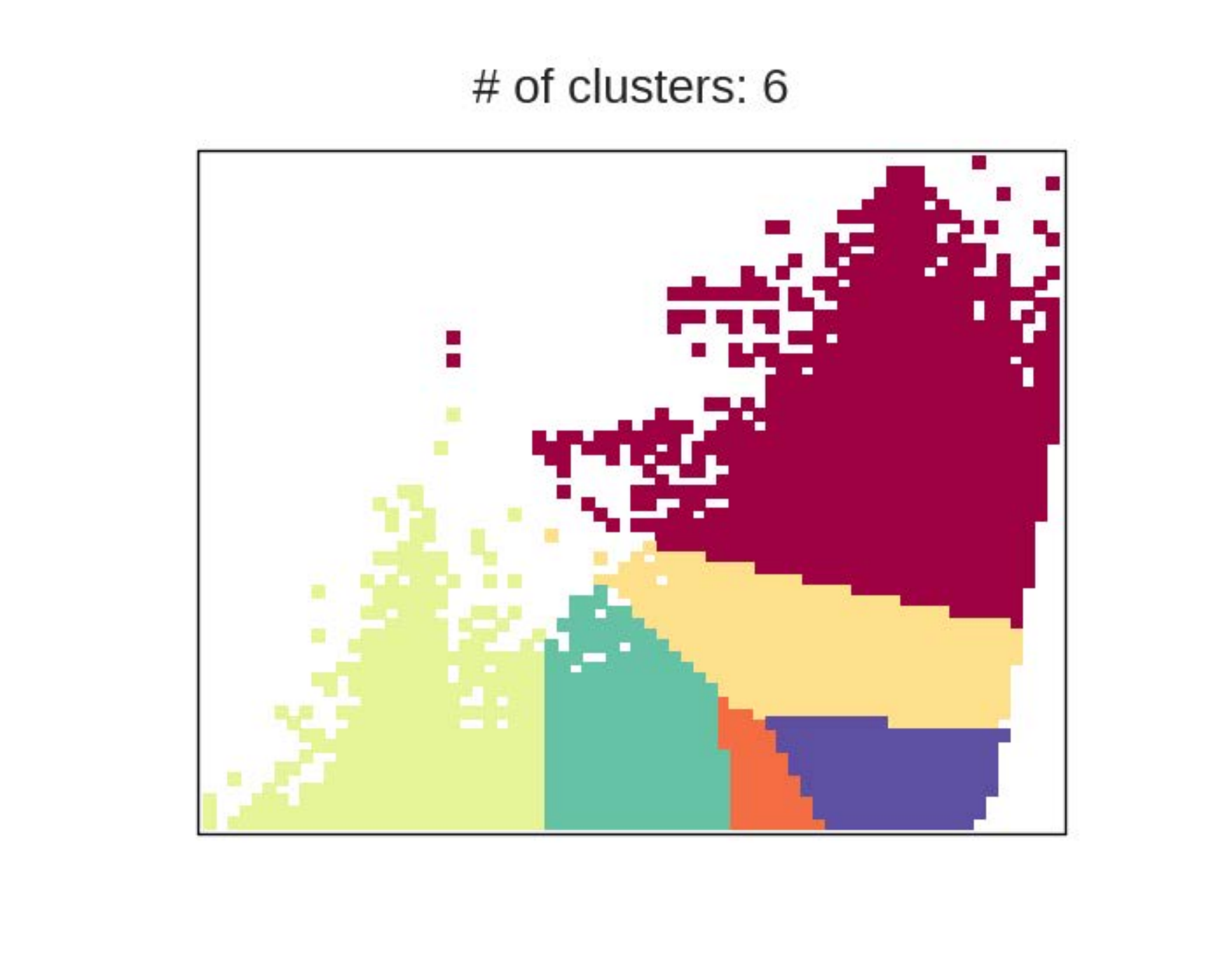}
		\caption{\centering{X-means}}
		\label{fig:wbxm}
	\end{subfigure} ~
	\begin{subfigure}[t]{0.16\linewidth}
		\centering
		\includegraphics[height=0.95in]{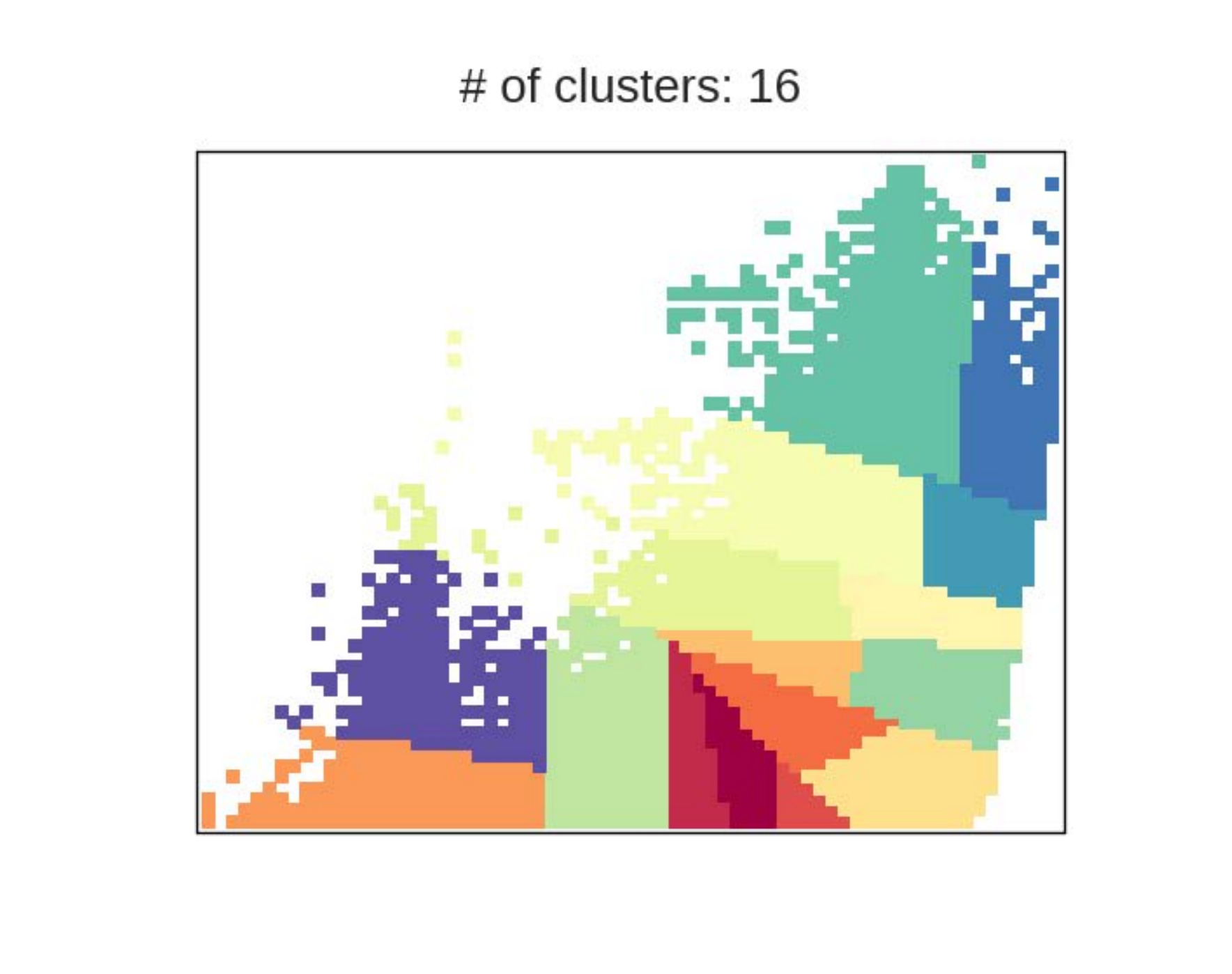}
		\caption{\centering{G-means}}
		\label{fig:wbgm}
	\end{subfigure} ~
	\begin{subfigure}[t]{0.16\linewidth}
		\centering
		\includegraphics[height=0.95in]{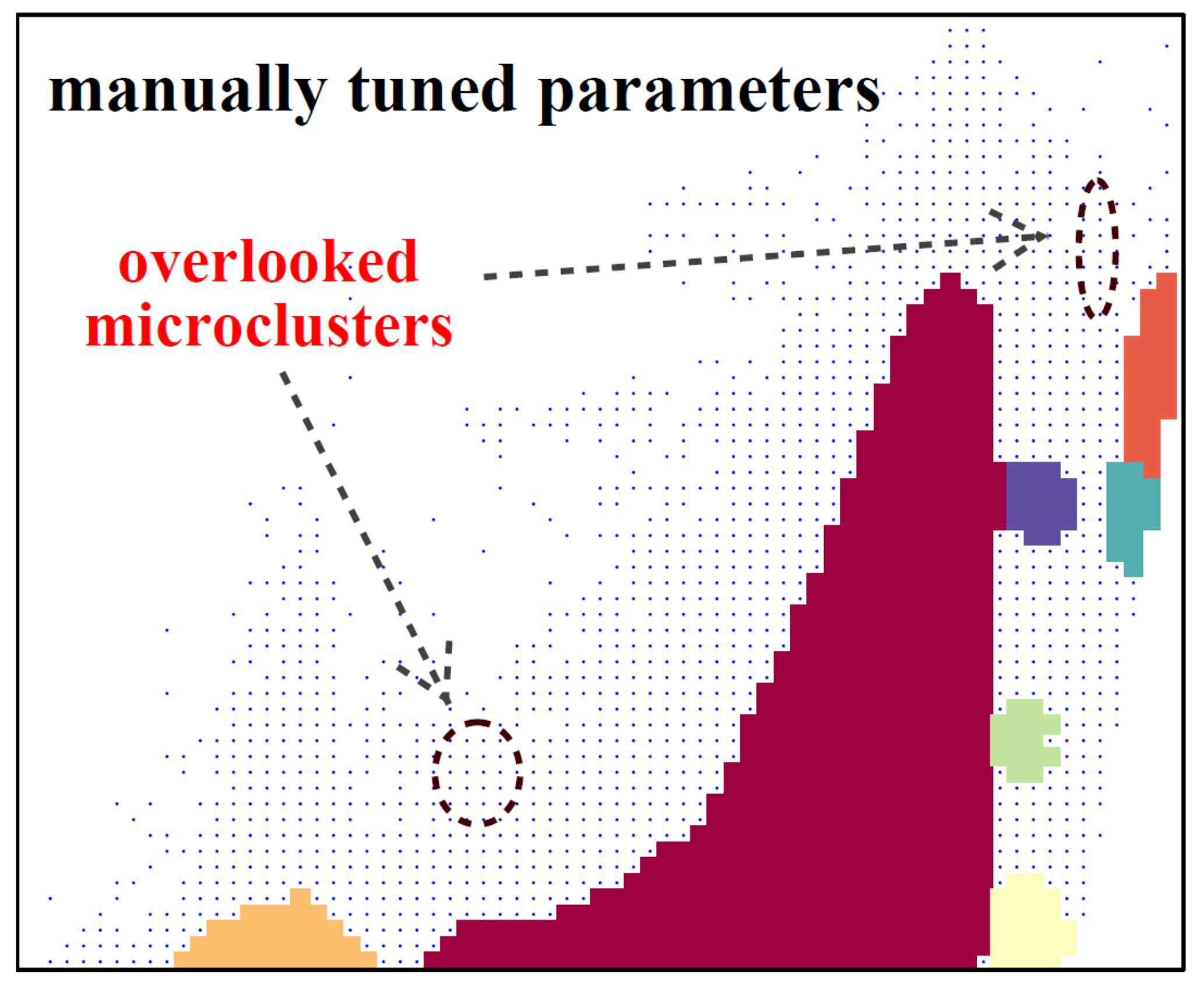}
		\caption{\centering{DBSCAN}}
		\label{fig:wbdbscan}
	\end{subfigure} ~
	\begin{subfigure}[t]{0.16\linewidth}
		\centering
		\includegraphics[height=0.95in]{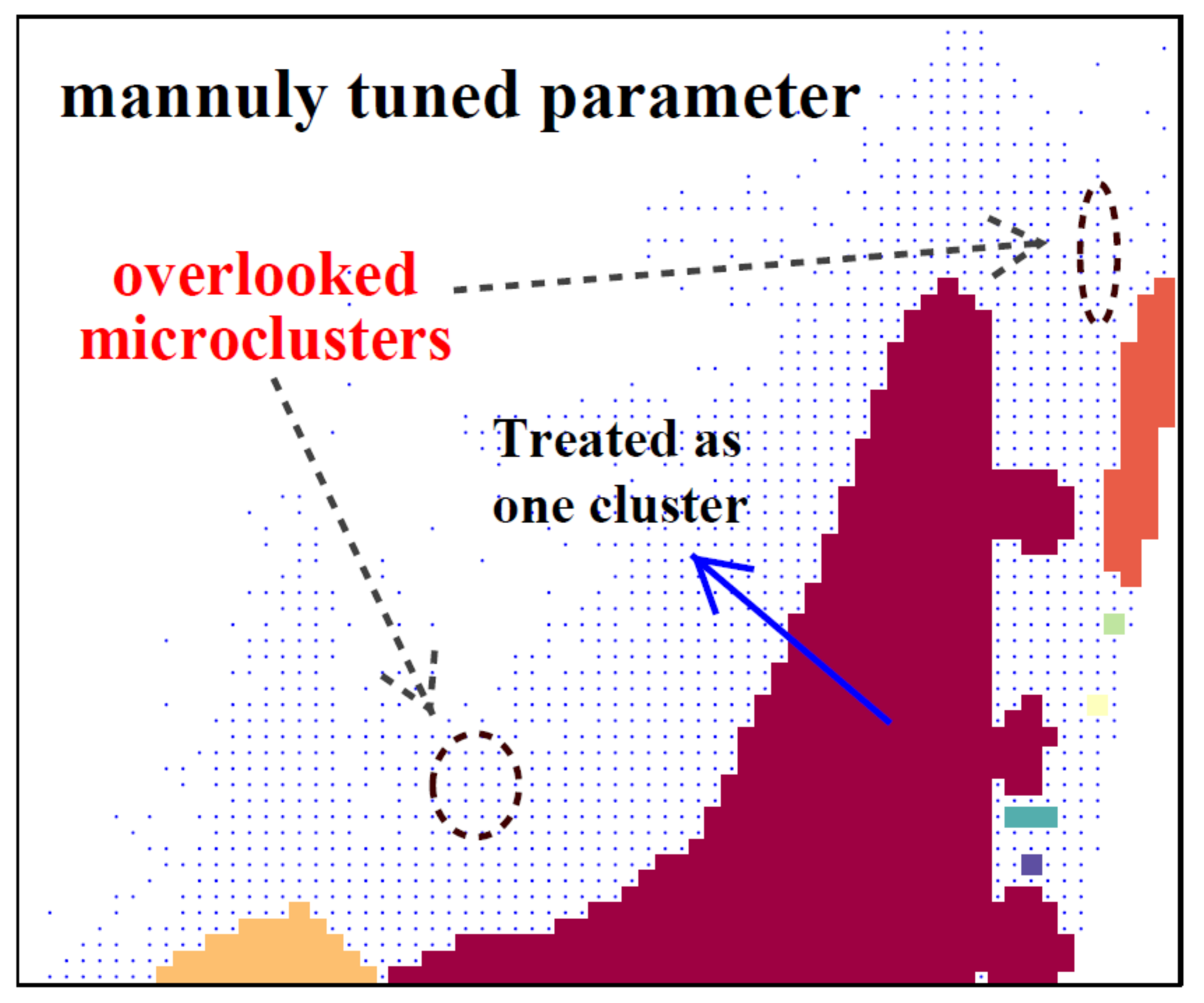}
		\caption{\centering{STING}}
		\label{fig:wbsting}
	\end{subfigure} ~
	\begin{subfigure}[t]{0.22\linewidth}
		\centering
		\includegraphics[height=0.95in]{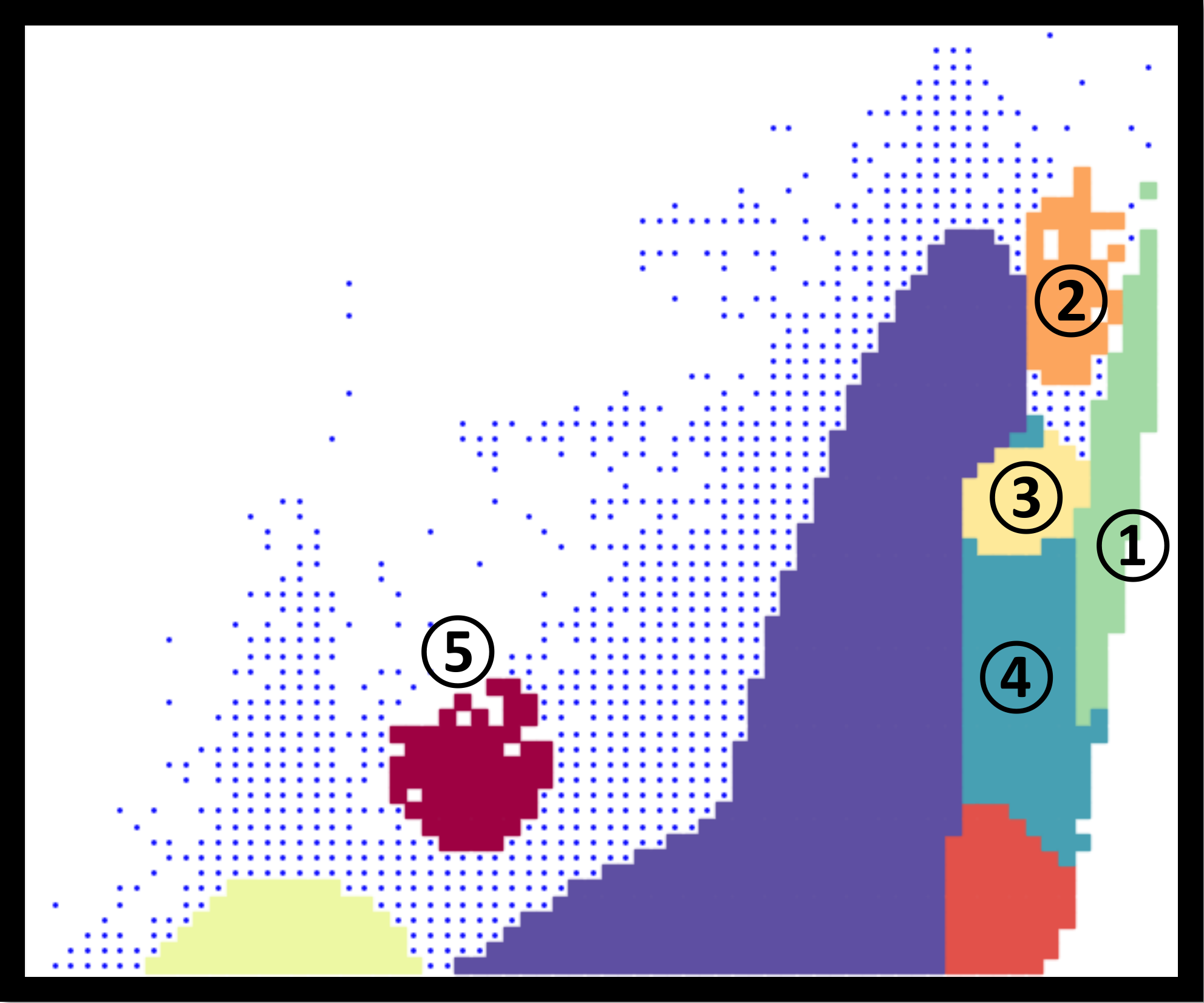}
		\caption{\centering{Our proposed \Eaglemine}}
		\label{fig:wbeaglemine}
	\end{subfigure}
    \vspace{-0.07in}
	\caption{ Our proposed \eaglemine achieves effective results on
	microblog Sina Weibo data. 
 	    (a) the histogram plot in heat map.
		(b) the summarization of \eaglemine with truncated distributions.
		(c) suspicious users deleted from the Sina Weibo website, bots
		with common username prefixes, and an isolated
		network found by \Eaglemine. 182x:``best*'' indicates that 182
		bots have common username prefix ``best'' in the cluster.
		(d)-(h) are separately the clustering results of the baselines and \Eaglemine.
		The blue scatter points in (f)-(h) denote individual outliers.
		Even with extensive manual work to tune parameters, 
		DBSCAN and STING still overlook two low-density micro-clusters
		(dashed circles) which are easily spotted by human vision and our \Eaglemine,
		since the fixed density threshold. Moreover, STING fails to
		separate three different clusters (red) on the right side of the major cluster. 
	}
    \vspace{-0.1in}
	\label{fig:wbexp}
\end{figure*}

\section{Related work}
\label{sec:related}
For the Gaussian clusters, K-means, X-means\cite{pelleg2000x}, 
G-means\cite{Hamerly2004Learning}, and BIRCH\cite{Zhang:1996birch} 
(which is suitable for spherical clusters) algorithms suffer from 
being sensitive to outliers.
Density based methods, such as DBSCAN\cite{ester1996density} 
and OPTICS\cite{ankerst1999optics} are noise-resistant and 
can detect clusters of arbitrary shape and data distribution, 
while the clustering performance relies on density threshold 
for DBSCAN, and also for OPTICS to derive clusters from reachability-plot.
RIC\cite{bohm2006robust} enhances other clustering algorithms 
as a framework, using minimum description language as 
goodness criterion to select fitting distributions 
and separate noise.
STING\cite{wang1997sting} hierarchically merges 
grids in lower layers to find clusters with a given density threshold.
Clustering algorithms\cite{roerdink2000watershed}
derived from the watershed transformation\cite{vincent1991watersheds},
treat pixel region between watersheds as one cluster,
and only focus on the final results and ignores the hierarchical 
structure of clusters. 
Community detection algorithms\cite{lancichinetti2009community}, 
modularity-driven clustering, and cut-based 
methods\cite{schaeffer2007graph} usually cannot handle large graphs 
with million nodes or fail to provide intuitive and interpretable 
result when applying to graph clustering.

\begin{table}[htbp]
	\centering
	\caption{Comparison between \eaglemine  and other related algorithms
		(DBD $=$ Dense block detection).}
	\vspace{-0.11in}
	\resizebox{\columnwidth}{!}{
		\begin{tabular}{L||cccccccc||c}
			& \rotv{X-means~\cite{pelleg2000x}}
			& \rotv{G-means~\cite{Hamerly2004Learning}}
			& \rotv{DBSCAN~\cite{ester1996density}}
			& \rotv{BIRCH~\cite{Zhang:1996birch}}
			& \rotv{STING~\cite{wang1997sting}}
			& \rotv{RIC~\cite{bohm2006robust}} 
			& \rotv{DBD~\cite{hooi2016fraudar,Pei2005On}}
			& \rotv{\eaglemine }\\
			\hline                
			parameter free	       & \checkmark & \checkmark  &            &            &            & \checkmark  & \checkmark & \checkmark \\
			non-spherical cluster  &            & \checkmark  & \checkmark &            & \checkmark & \checkmark  & ?          & \checkmark\\
			anomaly detection      &            &             & \checkmark &            & \checkmark & \checkmark  & \checkmark & \checkmark \\
			summarization          &            &             &            & \checkmark & \checkmark & \checkmark  &            & \checkmark \\
			linear in \#nodes      &            &             &            & \checkmark & \checkmark &             &            & \checkmark
		\end{tabular}%
	}
	\label{tbrelworks}%
	\vspace{-0.5in}
\end{table}%

Supported by human vision theory, including visual saliency, 
color sensitive, depth perception and attention of vision 
system\cite{heynckes2016predictive}, 
visualization techniques\cite{ware1988color, Borkin2011EAV} and HCI tools
help to get insight into data\cite{tukey1985computer,  akoglu2012opavion}. 
\textsc{Scagnostic}\cite{tukey1985computer,buja1992computing}
diagnoses the anomalies from the plots of scattered points.
\cite{wilkinson2005graph} improves the detection by statistical 
features derived from graph-theoretic measures.
Net-Ray\cite{Kang2014NetRayVA} visualizes and mines adjacency matrices
and scatter plots of a large graph, and discovers some interesting 
patterns. 
In terms of anomaly detection of the graph,
\cite{prakash2010eigenspokes,jiang2014inferring} find communities 
and suspicious clusters in graph with spectral-subspace plots.
\textsc{SpokEn}\cite{prakash2010eigenspokes} considers the 
``eigenspokes'' pattern on EE-plot produced by pairs of eigenvectors 
of graphs, and is later generalized for fraud detection.
As more recent works, dense subgraph and subtensor have been proposed
to detect anomaly patterns and suspicious 
behavior\cite{kumar2010structure,Chen2010Dense, hooi2016fraudar}.
Fraudar\cite{hooi2016fraudar} proposes a densest
subgraph-detection method that
incorporates the suspiciousness of nodes and edges during optimization.

\eaglemine differs from majority of 
above methods as summarized in Table~\ref{tbrelworks}.
Our proposed method \eaglemine is the only one that matches
all specifications.

\section{Proposed Model}
\label{sec:method}
First, let a graph $\mathcal G = (\pmb V, \pmb E)$, where $\pmb V$ is 
the node set $\pmb V$ and $\pmb E$ is the edge set. 
$\mathcal G$ can be either homogeneous, such as friendship/following 
relations, or bipartite as users rating restaurants.
Then our problem is informally defined as follows.  
\begin{problem}[Informal Definition]
    \label{prob:def}
    \textbf{Given} a graph $\mathcal G$ 
    and characteristics of \pmb{V}, we want to
    \begin{compactenum}[\bfseries 1.]
		\item \textbf{Separate} the nodes sharing similar characteristics 
		         into groups.
		\item \textbf{Find} outliers which are small groups of nodes or
		    scattering nodes with different characteristics and deviate away from the rest,
		         if they exist.
	  \end{compactenum}
     \textbf{Optimize}: the goodness-of-fit of characteristics distribution, 
               and the consistency of human visual recognition.
\end{problem}

In our problem, we are free to use any 
sort of node features 
(characteristics) 
that can better visualize large graphs for manual pattern recognition, 
such as degrees, spectral quantities, number of triangles and 
average neighbor degree etc. 
To visualize feature similarity, we bucketize feature values 
and construct histogram by mapping nodes 
into corresponding histogram cells. 
The grid width can be selected by the approaches~ in~\cite{Wasserman:2006nonpara}.
Thus nodes with similar characteristics form 
dense clusters of cells, which facilitates finding node groups.

More generally, each graph node can be associated with $k$ features, and nodes
are located into the cells of a  $k$-dimensional histogram $\mathcal{H}$.
$k$ can be more than 2 even though it is not easy to visualize
a high-dimensional histogram. In our work, we introduce 
\eaglemine in a two-dimensional (2D) histogram for simplicity, 
which is easy to extend to the high-dimensional case.

Therefore, to separate nodes into groups and find outliers, 
our summarization model 
consists of  
\begin{compactitem}
    \item \textbf{Configurable vocabulary:} 
    distributions for describing dense clusters of histogram cells. 
    \item \textbf{Parameters:}
     \emph{vocabulary term} (distribution type) and
	 \emph{parameter configuration} of the distributions used for describing
	a cluster, and \emph{number of samples} i.e. nodes in each cluster.
\end{compactitem}
The configurable vocabulary can include all suitable distributions,
such as Uniform, Gaussian, Laplace, and exponential distributions.

\section{Our proposed method}

Our method is guided by the traits of 
human vision and cognitive system as follows:
\begin{trait}
    \label{traitisl}
     Human vision usually detects connected components, which 
     can be rapidly recognized by eyes despite substantial appearance 
     variation\cite{dicarlo2012neuron, liu2015understanding}. 
\end{trait}
This insight helps us to identify different connected and dense regions 
in the histogram as candidate clusters and makes refinement for 
smoothing.
\begin{trait}
    \label{traithie}
     Top-to-bottom recognition and hierarchical 
     segmentation\cite{arbelaez2011contour}. Humans organize 
     basic elements (e.g. words, shapes, visual-areas) into 
     higher-order groupings to generate and represent
     complex hierarchical structures in human cognition and 
     visual-spatial domains.
\end{trait}
This trait suggests organizing and exploring node groups based on 
a hierarchy, or tree structure, as we will do.

\subsection{Water-level tree}
\label{sect:water-level tree}

First, let histogram $\mathcal H = \{h_{i, j}\,|\,i\in[1, n_r]; j\in[1,n_c]\}$, 
where $i$, $j$ are row and column index of histogram grids respectively, 
and height $h_{i, j}$ is the number of nodes in the grid $(i, j)$. 
We name a connected region formed by non-empty grids ($h>0$) as an
\emph{\textbf{island}}, and the regions formed by empty
grids as \emph{\textbf{water area}}.
Assume that we can flood island areas which makes grids with
$h_{i, j}<r$ to be underwater, 
i.e. setting $h_{i, j} = 0$, where $r$ is a water level. 
Then we call the resulting islands as the islands in the condition 
of water level $r$.

To organize all found islands during the flooding of different 
water-levels for \pltname $\mathcal{H}$, we propose to construct a 
water-level tree $\mathcal{T}$. 
The nodes represent islands and the edges denote the relationship 
that child islands come from the parent island owing to the water flooding. 
Thus, the root of $\mathcal{T}$ corresponds water-level zero, 
while moving towards the leaves, the nodes are at a higher 
water-level of flooding.

In fact, the islands are candidate clusters as Trait~\ref{traitisl}. 
The flooding process intuitively reflects how human eyes hierarchically 
capture different objects from the color histogram $\mathcal H$ 
as Trait \ref{traithie} (see Figure\ref{fig:sina_outdhub}),
where the gradient colors depict clusters at different water-levels.

The \textsc{WaterLevelTree} algorithm is shown in Algorithm~\ref{alg:islds}.
We start from the root with all non-empty grids.
Rising the water-level by threshold $r$ represents 
scanning different density of grids.
Due to the power-law-like distribution of grid heights,
we use the logarithm of heights. 
The water-level increases from $r=0$ to $\log h_{max}$ with 
a fixed step size $s$, where $h_{max} = \max{\mathcal H}$.
For smoothing the islands at each eater-level we use binary 
opening ($\circ$), a basic workhorse of mathematical morphology,
to remove small objects (noise) from the foreground of the 
histogram and also separate weakly-connected parts with a specific structure 
element\footnote{Here we use 2$\times$2 square-shape ``probe''.}.
Afterwards, we link each island at current level $r$ to its parent 
at level $r-1$ of the tree (see Figure\ref{fig:contract}).
Finally, when $r = \log h_{max}$, very small grids may connect 
to the tree $\mathcal{T}$.

\begin{algorithm}[thb]
	\caption{\textsc{WaterLevelTree} Algorithm}
	\label{alg:islds}
	\begin{algorithmic}[1]
		\Require \pltname $\mathcal{H}$. 
		\Ensure Water-level tree $\mathcal{T}$ of islands.
		\State $\mathcal{T} =$ non-empty grid set in $\mathcal{H}$ as root.
		\For{water level $r = 0$ to $\log{h_{max}}$ by step $s$}
		\State $\mathcal{H}^{r} =$ assign $h_{i,j}\in \mathcal H$ to 
		zero if $\log h_{i,j} < r$.
		\State $\mathcal{H}^{r}=\mathcal{H}^{r}\circ${} \textbf{E}. 
		\Comment{\textit{binary opening} smooth}. 
		\State islands $\Theta^r =$ connected regions in $\mathcal{H}^{r}$. 
		\State link each island in $\Theta^r$ to their parents in $\mathcal T$.
		\EndFor
		\State Contract $\mathcal{T}$ by iteratively removing the single-child 
		islands and linking its children to parent.
		\State Prune $\mathcal T$ heuristically to remove noise islands.
		\State Expand island $\theta \in \Theta$. 
		\State \Return $\mathcal{T}$
	\end{algorithmic}
\end{algorithm}

The current tree $\mathcal T$ may contain many nodes with only one 
child, meaning no actual new islands appear, which leads to many 
redundant nodes due to the increasing step $s$ of water-level.
Therefore, we propose to use the following three steps to post-process 
the tree as refinement: \emph{contract, prune and expand}.

\textbf{Contract:}
We search the tree using depth-first search (DFS), and if a 
single-child node is found, we remove it and link its 
children to its parent. 
In the sequel, all single-child nodes will be contracted to 
their parents.
This is shown in Figure~\ref{fig:contract}, where the dashed 
lines with arrows depict that the single-child's children 
are linked to its parent. 

\textbf{Prune:}
The purpose of pruning is to smooth an island which has noise 
peaks on top of the island due to fluctuations of grid heights.  
As the example at bottom right of Figure\ref{fig:prune} shown, 
the island $\theta$ is at water-level $r$. When water-level 
rises to $r'$, three small peaks as noise islands connect to 
their parent $\theta$, so they are needed to be removed for smoothing.
Thus, with a breadth-first search (BFS), 
we prune such child branches (including the children's descendants)
according to the heuristic rule that the total area  of child islands
is less than half of their parent node
(see \textcircled{1} and \textcircled{2} in Figure\ref{fig:prune}).

\textbf{Expand:}
For the sake of fitting an island with a vocabulary term,
we try to include additional surrounding grids to avoid 
over-fitting in parameter learning. 
So we expand each island with its neighboring grids under the 
water, and iteratively absorb outward non-empty neighbors of 
each island until overlapping with other islands，
or doubling the area of the original island. 
The expansion results are illustrated with shadowed rings in 
Figure\ref{fig:prune} (see \textcircled 3).

\begin{figure}[t]
\centering
	\begin{subfigure}[t]{0.40\linewidth}
	   \centering
		\includegraphics[height=1.4in]{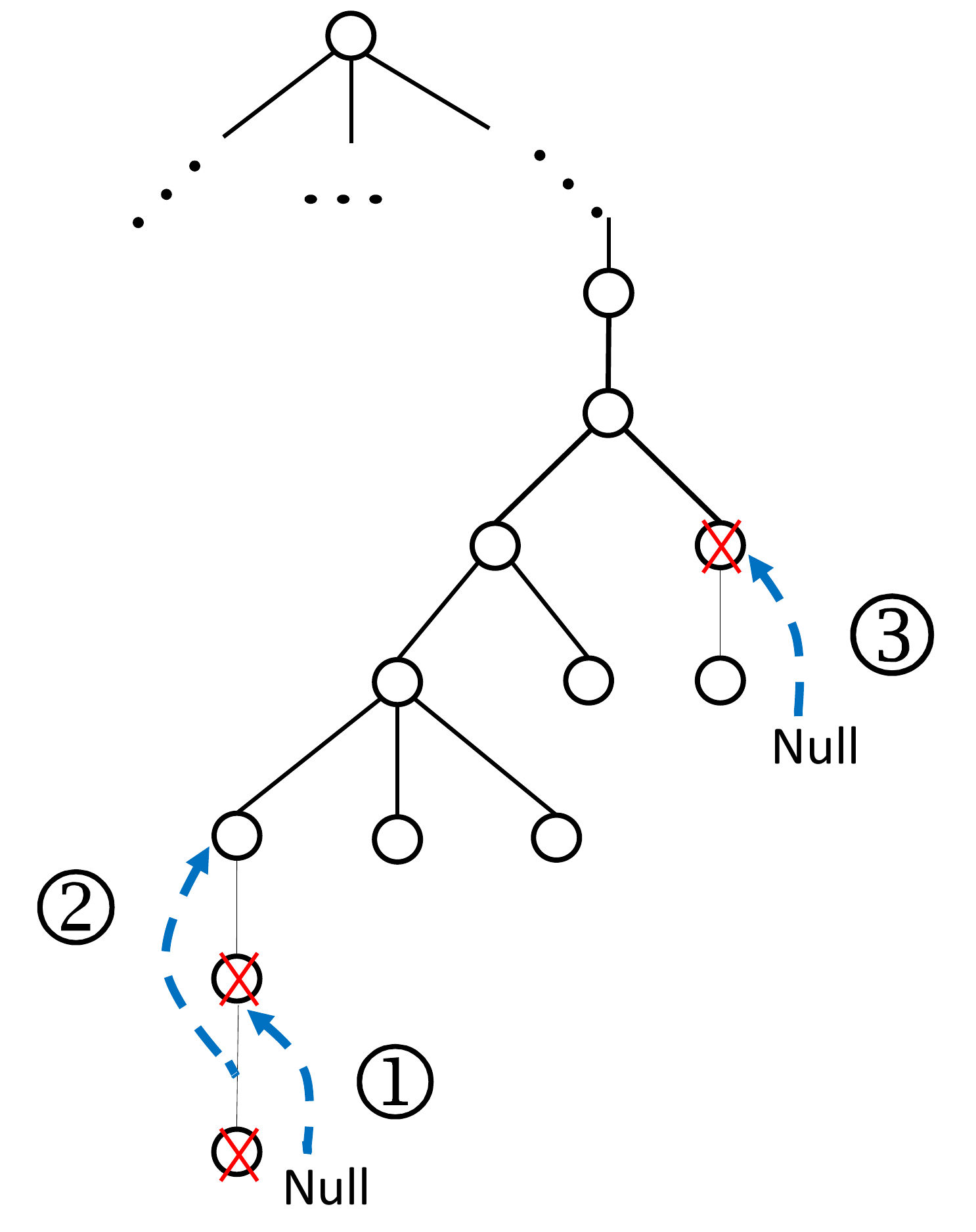}
      \caption{Tree contract}
		\label{fig:contract}
   \end{subfigure}
    \vline ~
   \begin{subfigure}[t]{0.46\linewidth}
	   \centering
		\includegraphics[height=1.4in]{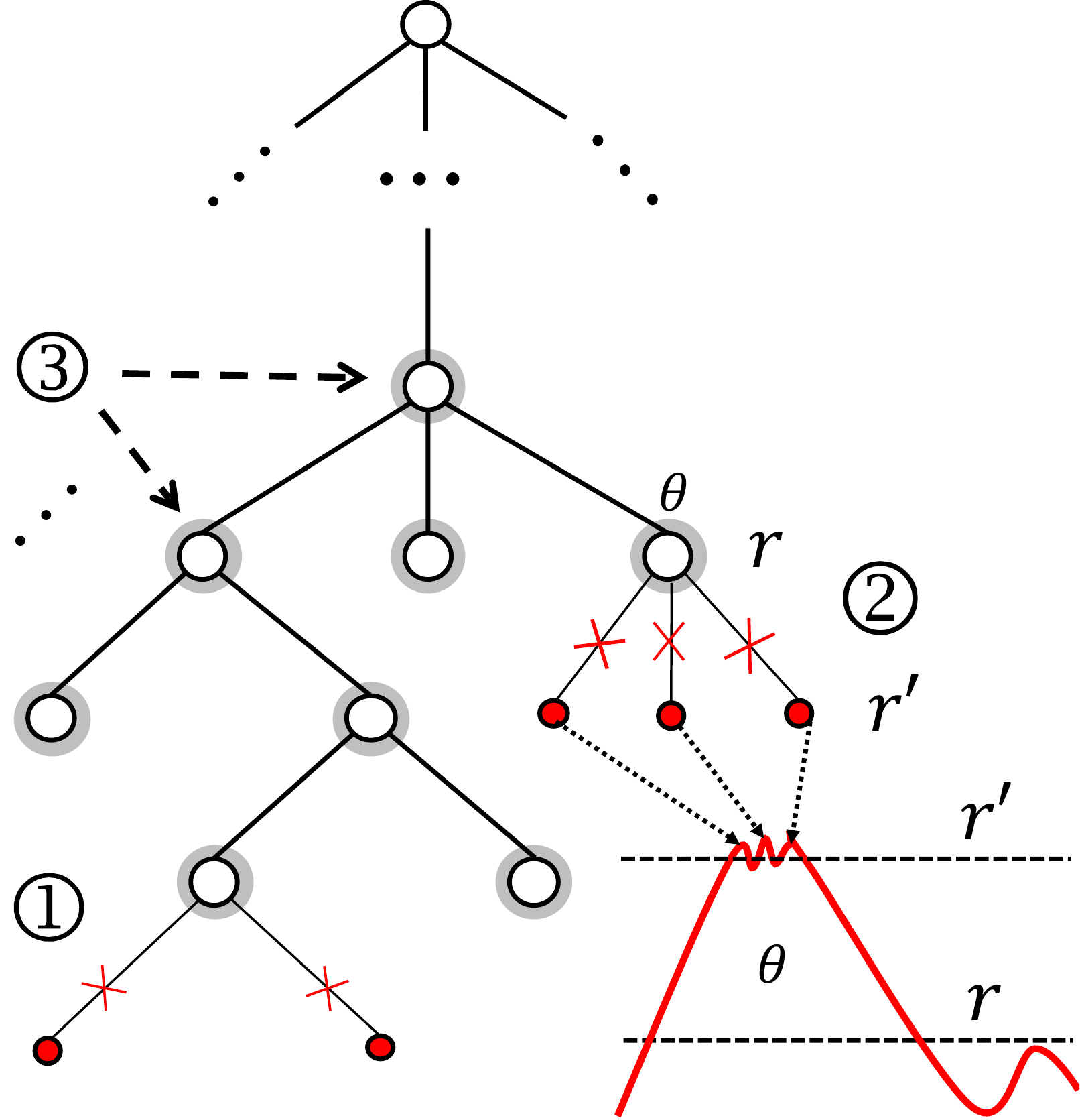}
      \caption{Tree prune and node expand}
		\label{fig:prune}
   \end{subfigure}
  
   \begin{subfigure}[t]{0.9\linewidth}
	   \centering
		\includegraphics[height=1.4in]{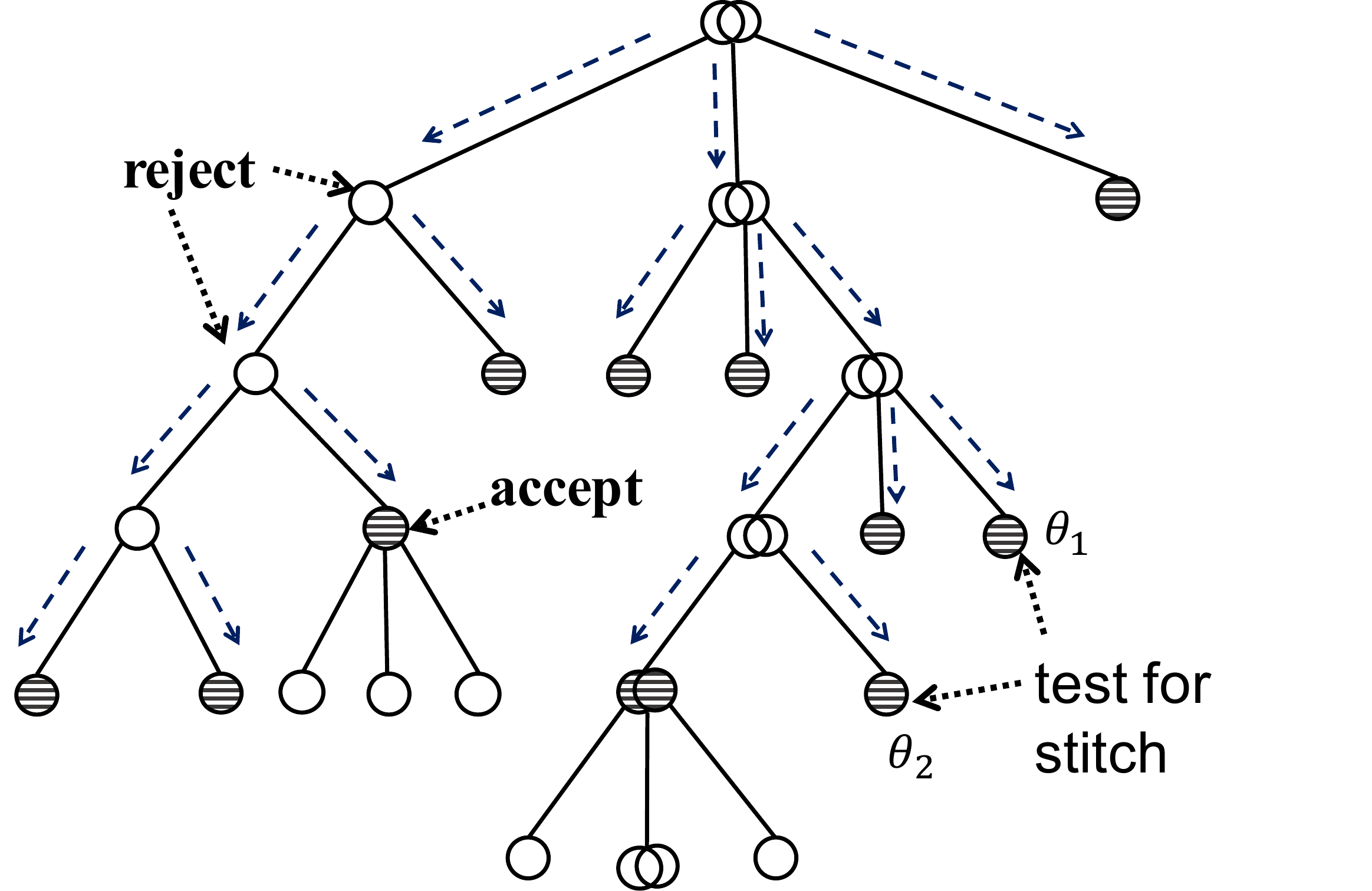}
      \caption{Statistical hypothesis Test and optimal islands search}
		\label{fig:hyp_test}
   \end{subfigure} 
    \vspace{-0.1in}
    \caption{Key steps in proposed 
    algorithms. }
    \vspace{-.21in}
\end{figure}

It is worth mentioning that in the  
watershed\cite{vincent1991watersheds,roerdink2000watershed} 
formalization, the non-empty areas of $\mathcal H$ are defined 
as catch basins for clustering purpose, and watersheds form 
the boundaries between clusters. 
These clusters actually correspond to the leaf islands 
in our water-level tree. 
However, \eaglemine builds the whole hierarchy of islands 
at different water-levels. 
We shall see later that \eaglemine searches the tree to find a better 
combination of clusters with hypothesis test and describes
the results with vocabulary rather than only keeping leaf islands.

Another tree-based clustering algorithm STING\cite{wang1997sting}
builds a multi-resolution tree for the histogram grids 
as an index for quick querying. The clusters were directly 
achieved by querying the tree-like index with density threshold 
$c$ as a parameter, while those clusters are actually the islands 
in the same level of water-level tree.

\subsection{Describing islands}
In the model, we define a  configurable vocabulary for describing
islands, which is flexible to include any user-defined distributions.
As we know, many features of a graph, 
such as node degrees, $k$-cores, and triangles,
typically follow power-law distributions, 
which makes some grids along the histogram boundary have 
the highest density (see the bottom region of Figure\ref{fig:sina_outdhub}).
Such an island can be described with a Gaussian 
distribution truncated at the boundary.
Moreover, the truncated normal distribution is powerful 
enough to fit various sizes and shapes of islands, 
e.g. lines, circles, ellipses, truncated ellipses. 
Moreover, grids in histogram $\mathcal{H}$ are discrete units, 
so we propose to select $digitized$, $truncated$ and 
$multivariate$ Gaussian distribution (DTM Gaussian for short) 
as one of the configurable vocabulary. 
The DTM Gaussian distribution is defined as:
\begin{definition}[DTM Gaussian distribution]
	The probability function of digitized, truncated, and multivariate Gaussian is
	$$
	\setlength{\abovedisplayskip}{1pt}
	\setlength{\belowdisplayskip}{1pt}
	P(g;\pmb{\mu, \Sigma}, \pmb{a}, \pmb{b}) = 
	\int\!\!\cdots\!\!\int_{g} \psi(\pmb{x}; \pmb{\mu, \Sigma}, \pmb{a},
	\pmb{b}) d\pmb x
	$$
	where $g$ is the grid for digitization in histogram, and the
	probability is an integral over hypercube $g$.
\end{definition}
where $\psi(\pmb{x}; \pmb{\mu, \Sigma}, \pmb{a}, \pmb{b})$ is a 
2-dimensional truncated normal distribution with
the mean $\pmb{\mu}$, non-singular covariance matrix $\pmb\Sigma$,
and truncated at $\pmb{a} = [0, 0]^T$, 
$\pmb{b}=[+\infty, +\infty]^T$ 
for lower and upper bound respectively.

In addition, we choose the overlapped DTM Gaussian distributions 
mixed with equal-weight as another vocabulary term to capture 
the skewed triangle-like island as shown in Figure\ref{fig:sina_outdhub}.
In our data study, this triangle-like island exists in many different 
histogram plots, and it contains the majority of graph nodes.
Taking the user-retweet-message bipartite graph as an example, 
Figure~\ref{fig:sina_outdhub} depicts users' distribution over 
out-degree and hubness (the first left singular vector).
The power-law makes the density decrease along the vertical axis, 
and users with the same degree show normal distribution
horizontally due to retweeting various types of messages 
with different hubness.

Our model can include any predefined distributions n vocabulary. 
We can use distribution-free statistical hypothesis test, 
like Pearson's $\chi^2$ test, 
or other heuristic-based approaches (as shown later) 
to decide the optimal distribution for each island in practice.
With the above vocabulary, we use maximum likelihood estimation 
to learn the parameter configuration for each term.
To learn the number of samples $N$, we minimize 
the total expectation error of all grids in an island:
$\sum_{g}\norm{N \cdot P(g;\cdot) - h}$.
We can choose the size of training data as $N$ directly
since this optimizes log-likelihood.

\subsection{\eaglemine Algorithm}
\label{sec:searchtree}
The overall view of our \eaglemine algorithm is described 
in Algorithm~\ref{alg:eaglemine}.
\eaglemine hierarchically detects \mcls in the \pltname 
plot $\mathcal H$, and outputs the parameters $\pmb C$ of
vocabulary term (e.g. single or mixture of DTM Gaussian),
parameter configuration, and number of samples for summarization. 
The key steps are shown as follows. 

First, the water-level tree $\mathcal T$ is constructed as 
Algorithm~\ref{alg:islds}.
We then decide the configurable term $y_{\theta}$ for
each node $\theta$.
In particular, we choose the mixture DTM Gaussian for a node 
depending on two observations:
1) the majority of normal nodes located at large triangle-like dense area;
2) one of the children should inherit the mixture distribution 
from the parent since children are part of their parent island 
in lower water level of $\mathcal T$. 
Thus, we get the vocabulary term $y_{\theta}$ for each 
island node, 
which is either DTM Gaussian or its mixture.

Afterwards, we search along the tree $\mathcal T$ with BFS to 
select the optimal combination of clusters (see step 5 to 14), 
based on the statistical hypothesis test 
(Anderson-Darling test\cite{stephens1974edf}).
$DistributionFit(\theta, y_{\theta})$ is used to denote the process 
of learning parameters for term $y_{\theta}$.

\begin{algorithm}[t] 
	\caption{\eaglemine Algorithm}
	\label{alg:eaglemine}
	\begin{algorithmic}[1]
		\Require \pltname $\mathcal{H}$ for node features of graph $\mathcal{G}$.
		\Ensure summarization $\pmb{C}$ consisting of distribution type, 
		        fitting parameters and number of samples for each distribution. 
		\State $\pmb{C} = \emptyset$
		\State $\mathcal{T} =$ \textsc{WaterLevelTree}$(\mathcal H)$
		\State $Y =$ decide distribution type $y_{\theta}$ from vocabulary 
		       for each island in $\mathcal{T}$. 
		\State queue $\pmb{Q} =$ root node of $\mathcal{T}$.
		\While{ $\pmb{Q} \ne \emptyset$} \hfill $\triangleright$ breath-first search (BFS)
			    \State ${\theta} \leftarrow$ dequeue of $\pmb Q$.
		\State $c = DistributionFit(\theta, y_{\theta})$
		\State Hypothesis test $\pmb H_0 = $ grids of island ${\theta}$ 
		        come from distribution $c$.
		\If {$\pmb H_0$ is rejected} 
		\State enqueue all the children of ${\theta}$ into $\pmb{Q}$
		\Else
		\State add $c$ into $\pmb{C}$
		\EndIf
		\EndWhile
		\State Stitch promising distributions in $\pmb{C}$.
		\State \Return summarization $\pmb{C}$.
	\end{algorithmic}
\end{algorithm}

Starting from the root node of $\mathcal{T}$,
for each node $\theta$ in search queue $\pmb{Q}$, 
we fit it with $c$ in configurable vocabulary. 
To decide whether to continue the BFS search,
we then assume the following null hypothesis:
\begin{equation}
	\setlength{\abovedisplayskip}{1.5pt}
	\setlength{\belowdisplayskip}{1.5pt}
    \nonumber
    \pmb H_0\text{: the grids of island $\theta$ come 
    	from distribution $c$.}
\end{equation}
Due to the variability of the grid-height in each island,
we have tried Pearson's $\chi^2$ test, BIC and AIC criteria, 
but the extreme heights made the test and other criteria unstable.
Thus, we test an island based on its binary image,
which focuses on whether the island's shape looks like 
a truncated Gaussian or mixture. 
In performing the hypothesis test, we project the grids on 
the major and minor axis separately in 2D space, 
and we accept the null hypothesis only when $\pmb H_0$ 
is true for both orientations.
So one-dimensional Anderson-Darling test (1\% significant level) 
is conducted on both axes,
and if one of the tests is rejected, the null hypothesis
will be rejected.

If $\pmb H_0$ is not rejected, we stop searching the island\rq{}s 
children and use current parameter $c$ to describe 
the island $\theta$. Otherwise, we need to further
explore the descendants of $\theta$.
The dashed lines with an arrow in Figure\ref{fig:hyp_test} 
demonstrate this process.
The final optimal combination of islands is 
shown in circles with texture.

Furthermore, some tiny small islands in the results $\pmb C$ 
may come from different parents (e.g.~ $\theta_1$ and $\theta_2$ in
Figure\ref{fig:hyp_test}). 
In such case, those islands that are physically close to each 
other may potentially be summarized by one distribution from the vocabulary. 
Thus, we use \emph{stitch} process in step 15 to merge them.
In the same way, we perform Anderson-Darling test to every pair of 
promising islands until no more changes occur.
When there are multiple pairs of islands that be merged at the same time,
we heuristically choose the islands pair with the least average 
log-likelihood reduction after stitching:
$$
\setlength{\abovedisplayskip}{1pt}
\setlength{\belowdisplayskip}{1pt}
(\theta_{i^*}, \theta_{j^*}) = 
    \argmin_{i, j}\frac{\mathcal L_i + \mathcal L_j - \mathcal L_{ij}}
                       {\text{\#points of $\theta_{i}$ and $\theta_j$}}
$$
where $\theta_i$ and $\theta_j$ are the pair of islands to be merged,
$\mathcal L_{(\cdot)}$ is log-likelihood of a island, 
and $\mathcal L_{ij}$ is the  log-likelihood of the merged island.

At last, the summarization $\pmb{C}$ of the 
histogram $\mathcal H$ is returned. 
Moreover, the histogram grids with very low probability ( $< 10^{-5}$) 
from all distributions are outliers.

Furthermore, since the only one main island fitted 
by mixture distribution contains the majority and normal nodes, 
we calculate the suspiciousness of 
other islands by the KL-divergence of their distributions:
$
    \sum_g N_i\cdot \text{KL}(P_{\theta_i}(g)\, ||\, P_{\theta_m}(g)),
$
where $\theta_m$ is the main island, $P(g)$ is the probability in grid $g$, and 
$N_i$ is the number of samples from the distribution of island $\theta_i$.

\subsection{Time complexity}
Given the features associated with nodes $\pmb V$,  
the time complexity for generating histogram $\mathcal{H}$ is 
$O(|V|)$. 

Let $M$ be the number of non-empty grids in the 
\pltname $\mathcal{H}$, and $K$ be the number of clusters.
We use gradient-descent to learn parameters in 
$DistributionFit(\cdot)$ of \eaglemine algorithm, 
so we assume that the number of iterations is $T$, 
which is related to the differences between initial and 
optimal objective values.
Then we have:
\begin{theorem}
    The time complexity of \eaglemine algorithm is 
    $O(\displaystyle\frac{\log h_{max}}{s}\cdot M + K\cdot T \cdot M)$.
\end{theorem}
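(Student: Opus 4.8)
The plan is to split the running time of \eaglemine (Algorithm~\ref{alg:eaglemine}) into the cost of building the water-level tree $\mathcal T$ (step~2, i.e.\ Algorithm~\ref{alg:islds}) and the cost of the BFS fitting loop (steps~5--14 together with the \emph{stitch} in step~15), and to show the former is $O(\frac{\log h_{max}}{s}\cdot M)$ while the latter is $O(K\cdot T\cdot M)$; summing the two terms gives the claim. Throughout I treat $\mathcal H$ as already given, since the $O(|V|)$ cost of binning the nodes is charged separately and merely adds.

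For the tree-construction term, first I would observe that the outer \textbf{for} loop of Algorithm~\ref{alg:islds} executes $\lceil \log h_{max}/s\rceil = O(\frac{\log h_{max}}{s})$ times, and that every operation inside one iteration is a constant number of sweeps over the at most $M$ non-empty grids: thresholding $\mathcal H^{r}$ is $O(M)$; the binary opening $\circ\,\mathbf{E}$ with the fixed $2\times2$ probe touches each grid $O(1)$ times, hence $O(M)$; extracting the connected regions $\Theta^{r}$ is standard connected-component labelling (BFS or union--find) on $O(M)$ occupied cells, again $O(M)$; and linking each island to its parent is $O(|\Theta^{r}|)=O(M)$. Thus one level costs $O(M)$ and the whole loop costs $O(\frac{\log h_{max}}{s}\cdot M)$. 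The post-processing steps \emph{contract}, \emph{prune} and \emph{expand} are tree/grid traversals whose total work is bounded by the number of tree nodes plus the grids they visit; since the islands produced over all levels number at most $\frac{\log h_{max}}{s}\cdot M$ and expansion at most doubles each island's area, all three fold into this same first term.

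For the fitting loop the crucial step is bounding the number of $DistributionFit$ calls. After \emph{contract} deletes every single-child node, $\mathcal T$ has no degree-one internal node, so in the subtree visited by the BFS every rejected (internal) node has at least two children; hence the number of rejected nodes is at most one less than the number of accepted clusters, and the BFS dequeues only $O(K)$ nodes in total. Each call runs $T$ gradient-descent iterations, and one iteration evaluates the DTM-Gaussian likelihood and its gradient over the $O(M)$ grids of the island, so a call is $O(T\cdot M)$ and all calls together are $O(K\cdot T\cdot M)$; the accompanying one-dimensional Anderson--Darling tests project onto two axes and sort in $O(M\log M)$, which is dominated. The hard part will be the \emph{stitch} step, which re-tests pairs of the $K$ promising islands and re-fits a merged distribution, so a naive bound is $O(K^{2}\cdot T\cdot M)$. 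I would argue it still fits into $O(K\cdot T\cdot M)$ by noting that only \mcls that are physically adjacent in $\mathcal H$ are candidate pairs, so the number of admissible merges is linear rather than quadratic in $K$, and each successful merge strictly decreases the cluster count, bounding the number of rounds; the per-round work then collapses into the second term. Combining the two phases yields the stated $O(\frac{\log h_{max}}{s}\cdot M + K\cdot T\cdot M)$.
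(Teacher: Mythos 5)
Your decomposition---tree construction in $O(\frac{\log h_{max}}{s}\cdot M)$, then $O(K)$ distribution fits at $O(T\cdot M)$ each---is the natural reading of the bound and matches the paper's evident intent (the paper's own proof is deferred to a supplementary document, so the comparison here is against the argument its statement implies). Your per-level accounting for the \textsc{WaterLevelTree} algorithm (thresholding, binary opening with a fixed $2\times 2$ probe, connected-component labelling, parent linking, each $O(M)$ per level) is correct, and the observation that contraction leaves no degree-one internal node, so the rejected (internal) nodes visited by the BFS number at most one fewer than the accepted clusters, is the right way to bound the number of $DistributionFit$ calls by $O(K)$.

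Two points need tightening, one of which is a genuine hole. First, a minor one: the $O(K)$ bound on dequeued nodes silently assumes every leaf of the visited subtree is an accepted cluster; a rejected island with no children is dequeued and fitted yet contributes nothing to $K$, so strictly you get $O(K+R)$ with $R$ the number of rejected childless leaves---harmless only if $K$ is taken to mean the number of islands fitted, which is how the paper uses it. Second, and more substantively, your stitch argument does not deliver the bound you claim. Even granting that only physically adjacent pairs are admissible (so $O(K)$ candidate pairs at any moment) and that there are at most $K-1$ successful merges, your own structure of ``rounds $\times$ per-round work'' gives $O(K)\cdot O(K\cdot T\cdot M)=O(K^{2}\cdot T\cdot M)$, not $O(K\cdot T\cdot M)$. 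To collapse stitch into the second term you need an amortized statement: each candidate pair is fitted and tested at most once over the entire process (results cached), and each merge creates only $O(1)$ new adjacent pairs, so the total number of pair-fits is $O(K)$. That in turn rests on a bounded-adjacency assumption for islands in the plane which you assert but never justify; alternatively, one can observe that stitching is applied only to the few ``tiny'' promising islands and charge its cost as lower-order. As written, the bound for the second phase has this gap.
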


\begin{proof}
	See the supplementary document.
\end{proof}

\section{Experiments}
\label{sec:experiments}

\begin{figure*}[thp]
	\centering
	\begin{subfigure}[t]{0.145\linewidth}
		\centering
		\includegraphics[height=0.65in, width=1.0in]{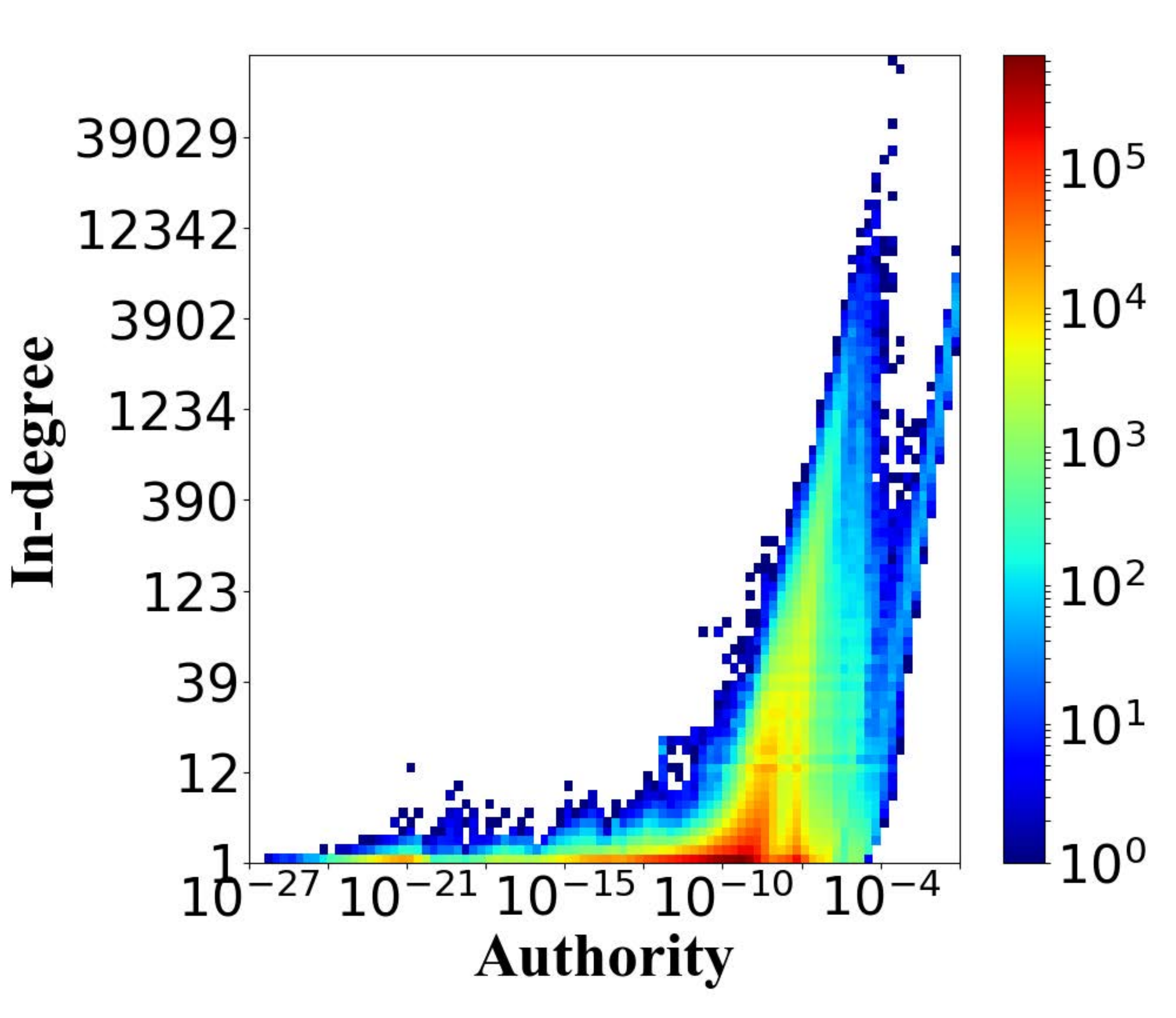}
		\caption{\centering{In-degree \,\, vs. Authority}}
		\label{fig:sinanov_inaut}
	\end{subfigure}
	\begin{subfigure}[t]{0.11\linewidth}
		\centering
		\includegraphics[height=0.65in]{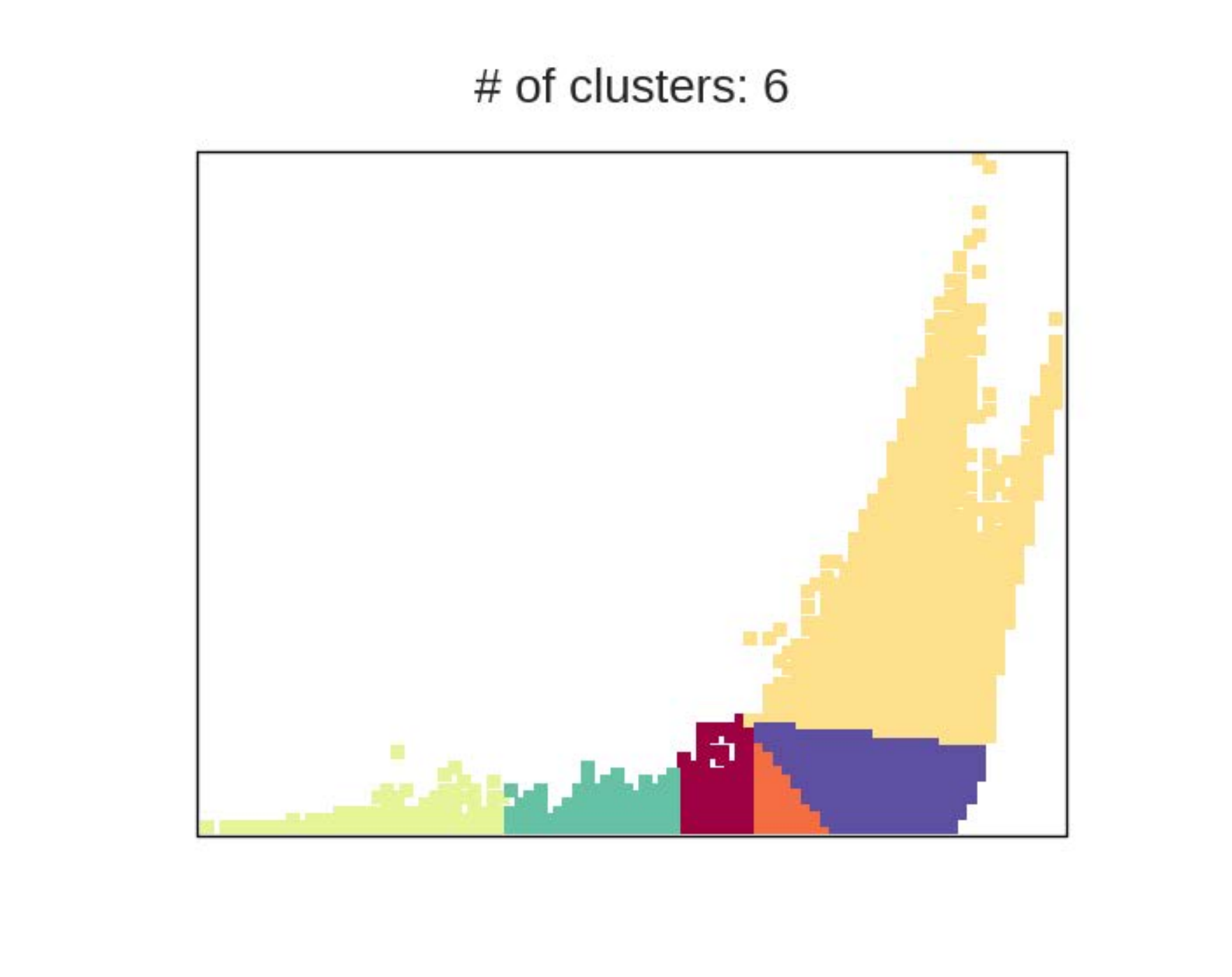}
		\caption{\centering{X-means \{6\}}}
		\label{fig:sinanov_inaut_xmeans}
	\end{subfigure}
	\begin{subfigure}[t]{0.11\linewidth}
		\centering
		\includegraphics[height=0.65in]{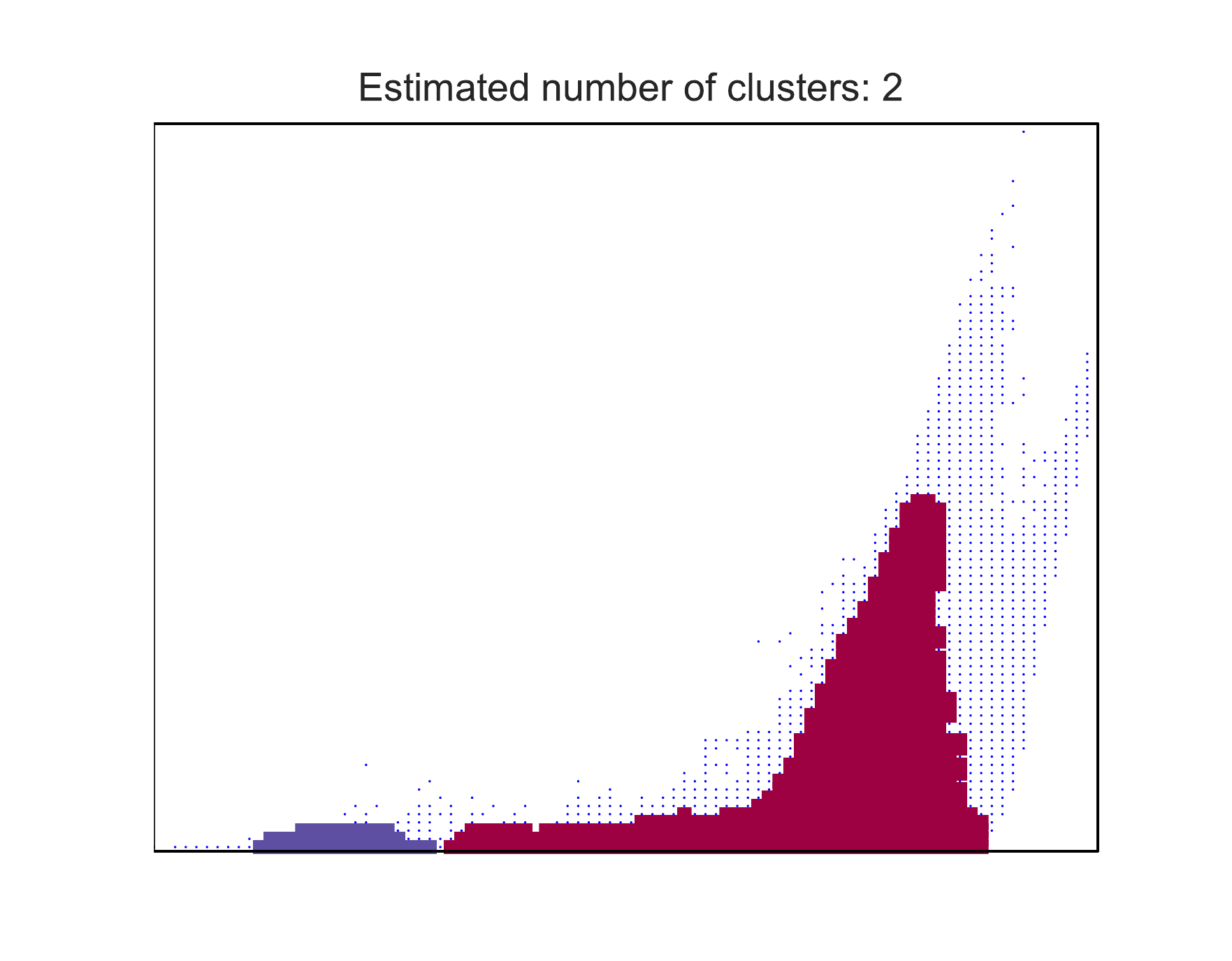}
		\caption{\centering{sting (\textbf{manual}) \{2\}}}
		\label{fig:sinanov_inaut_dbscan}
	\end{subfigure}
	\begin{subfigure}[t]{0.11\linewidth}
		\centering
		\includegraphics[height=0.65in]{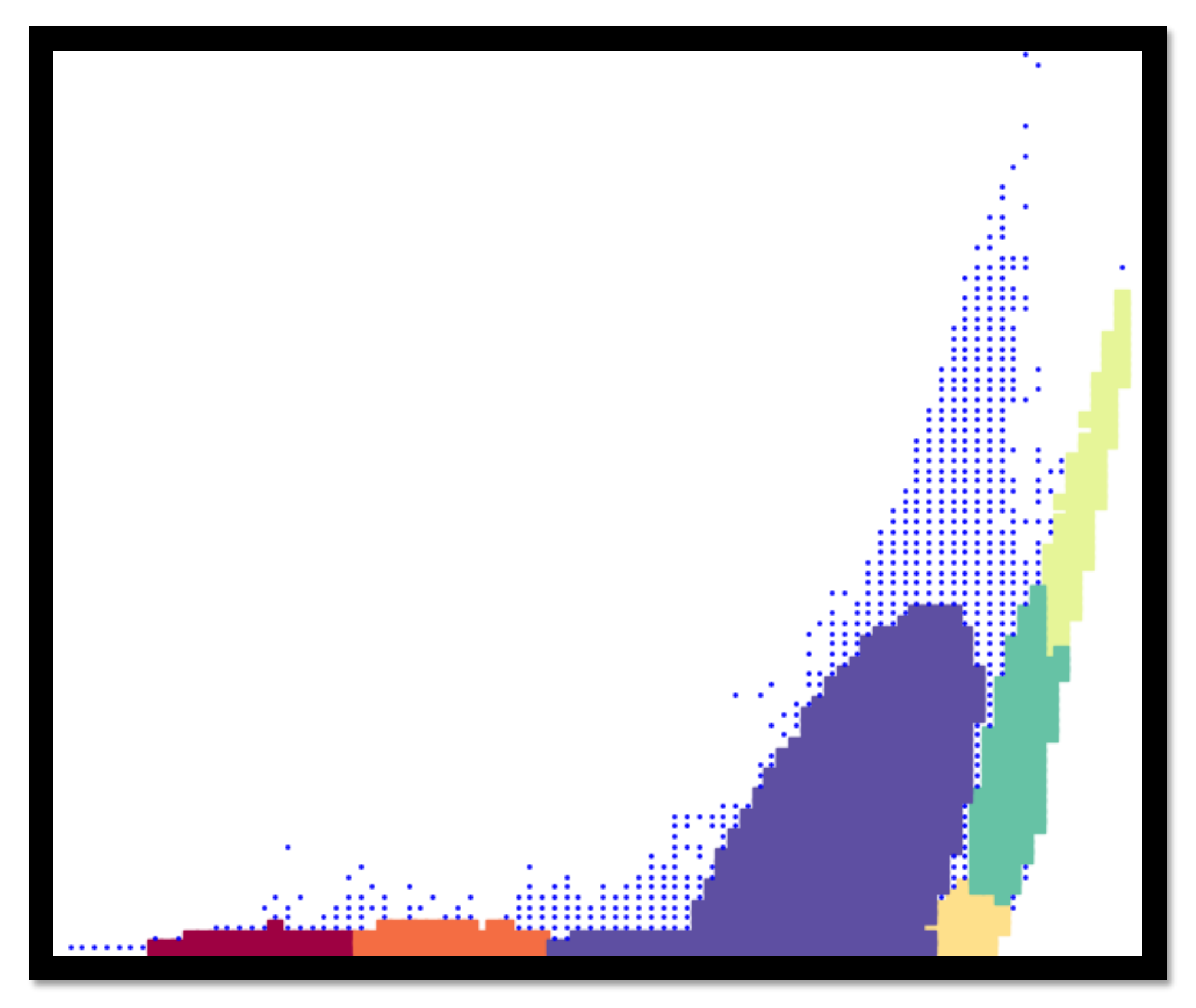}
		\caption{\centering{\eaglemine  \{6)\}}}
		\label{fig:sinanov_inaut_eaglemine}
	\end{subfigure}
	\vline
	\begin{subfigure}[t]{0.145\linewidth}
		\centering
		\includegraphics[height=0.65in, width=1.0in]{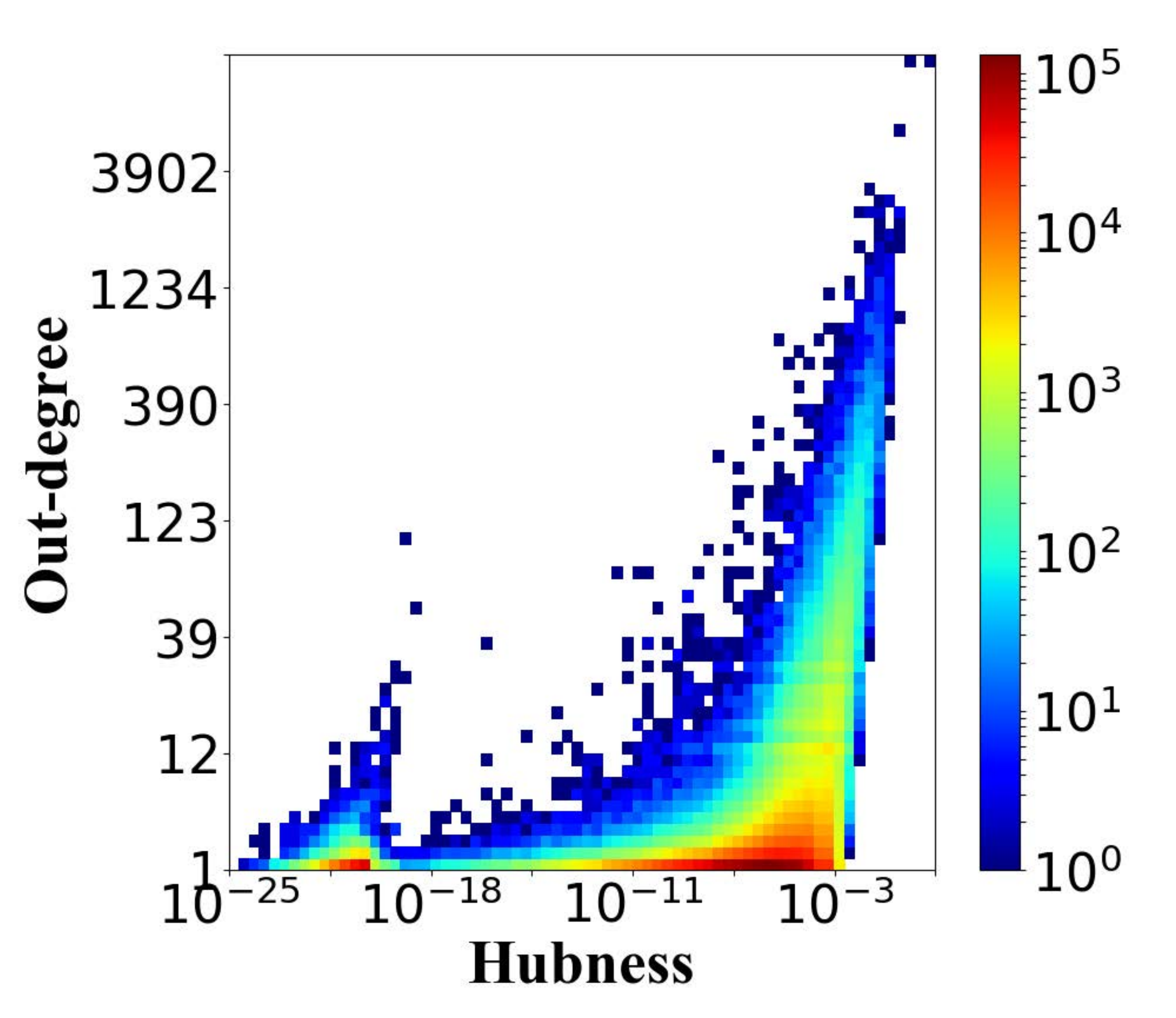}
		\caption{\centering{Out-degree \, vs. Hubness}}
		\label{fig:amazon_outhub}
	\end{subfigure}
	\begin{subfigure}[t]{0.11\linewidth}
		\centering
		\includegraphics[height=0.65in]{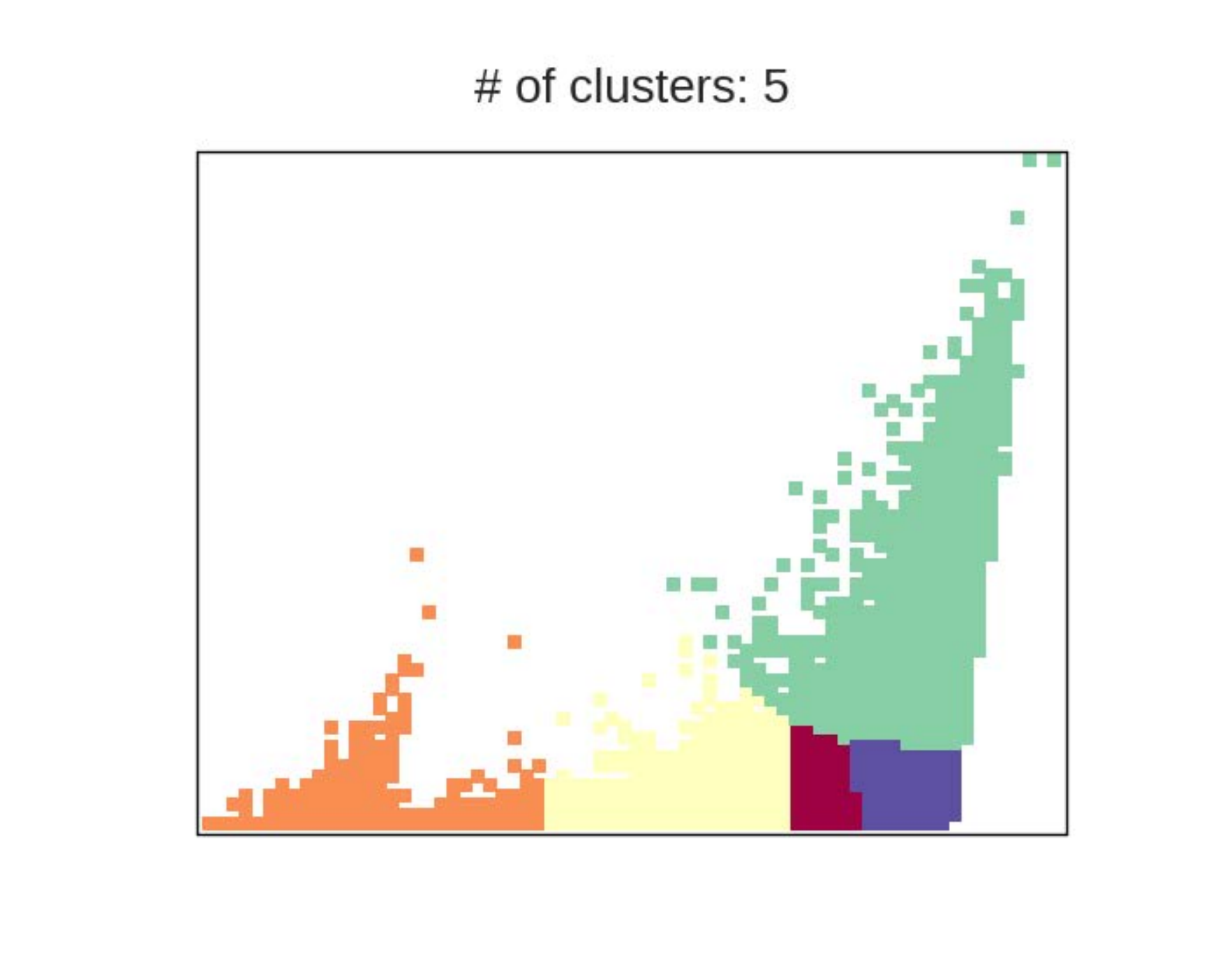}
		\caption{\centering{X-means \{5\}}}
		\label{fig:amazon_outhub_xmeans}
	\end{subfigure}
	\begin{subfigure}[t]{0.11\linewidth}
		\centering
		\includegraphics[height=0.65in]{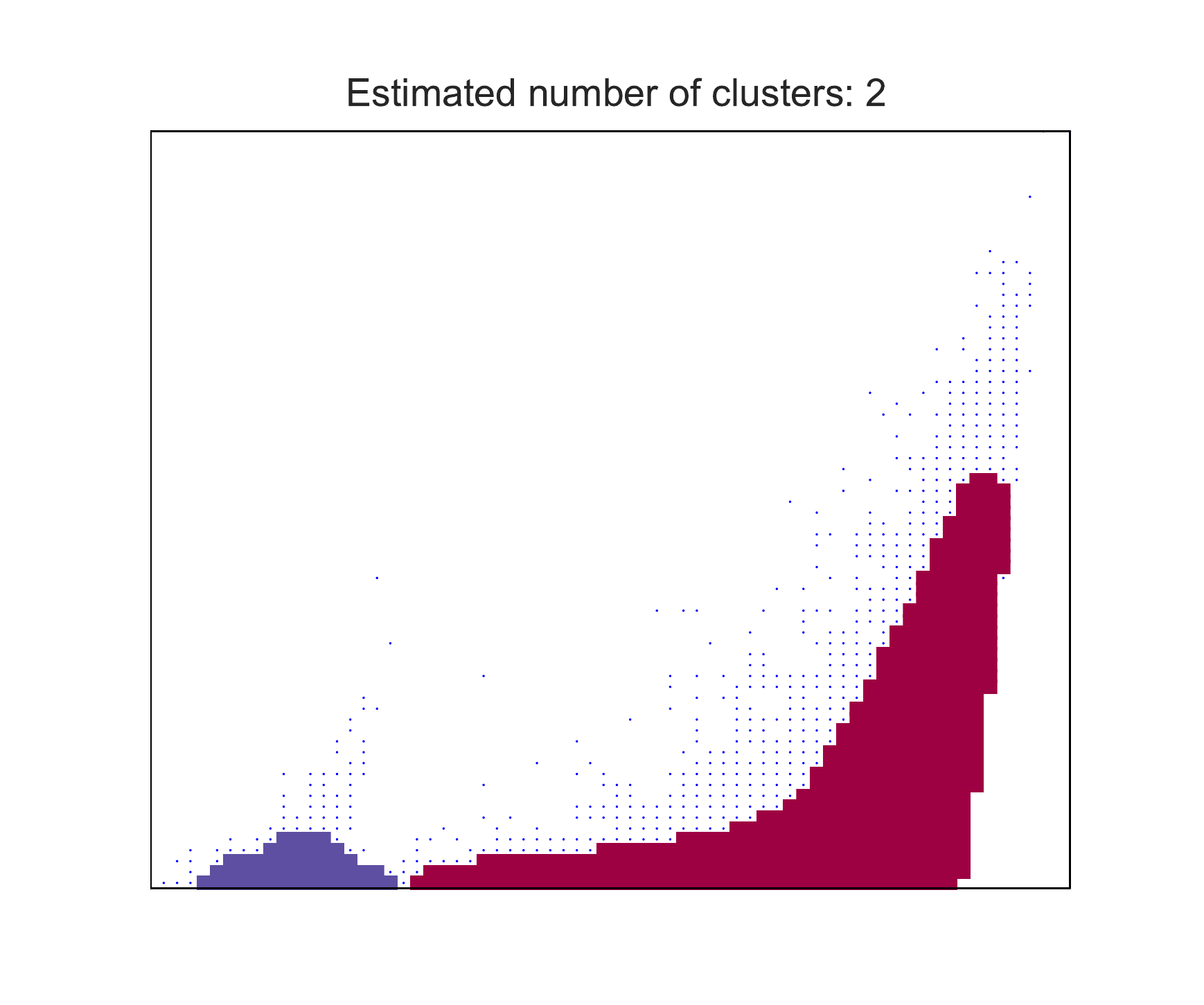}
		\caption{\centering{sting (\textbf{manual}) \{2\}}}
		\label{fig:amazon_outhub_dbscan}
	\end{subfigure}
	\begin{subfigure}[t]{0.11\linewidth}
		\centering
		\includegraphics[height=0.65in]{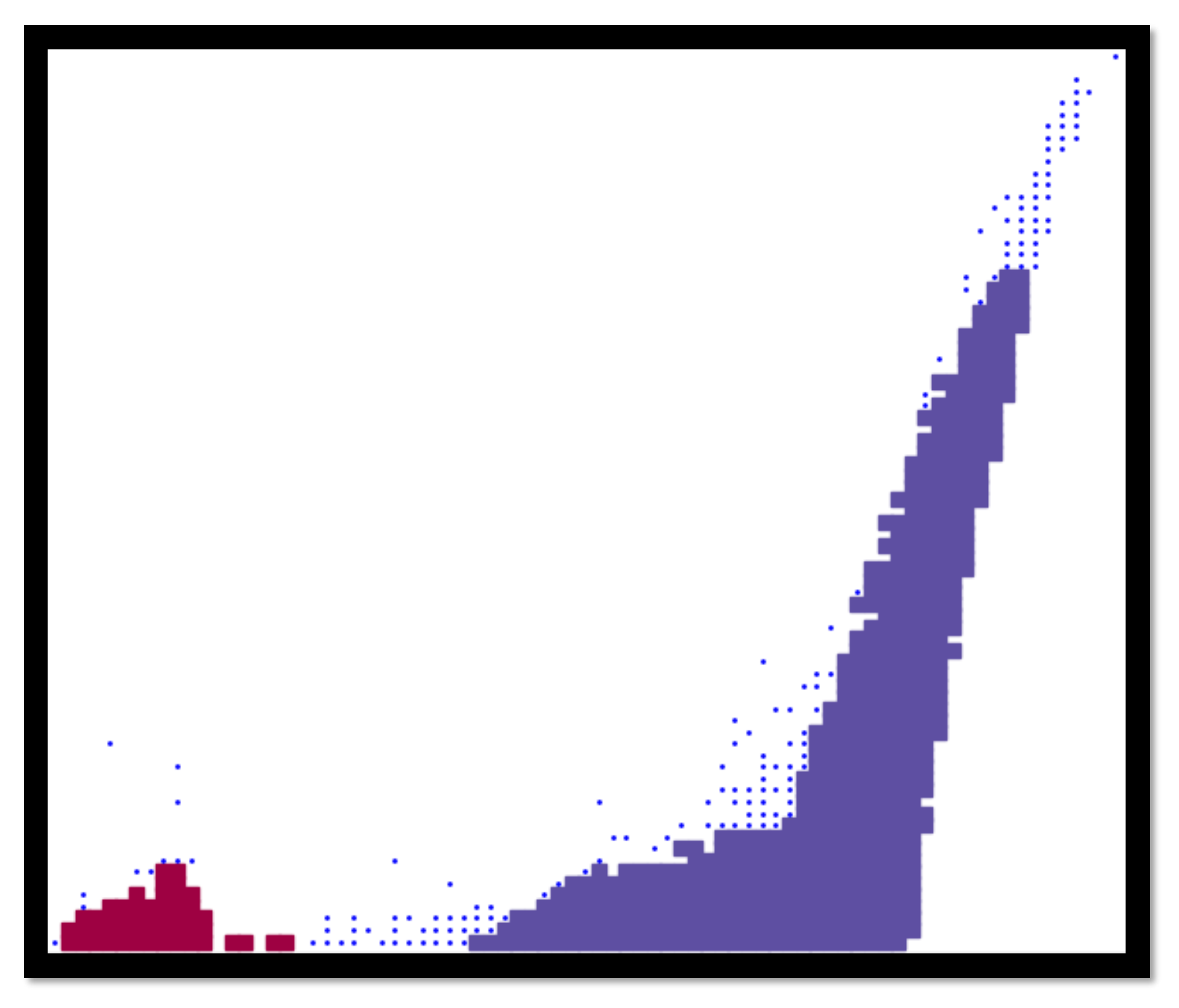}
		\caption{\centering{\eaglemine  \{2\}}}
		\label{fig:amazon_outhub_eaglemine}
	\end{subfigure}
	
	\begin{subfigure}[t]{0.145\linewidth}
		\centering
		\includegraphics[height=0.67in]{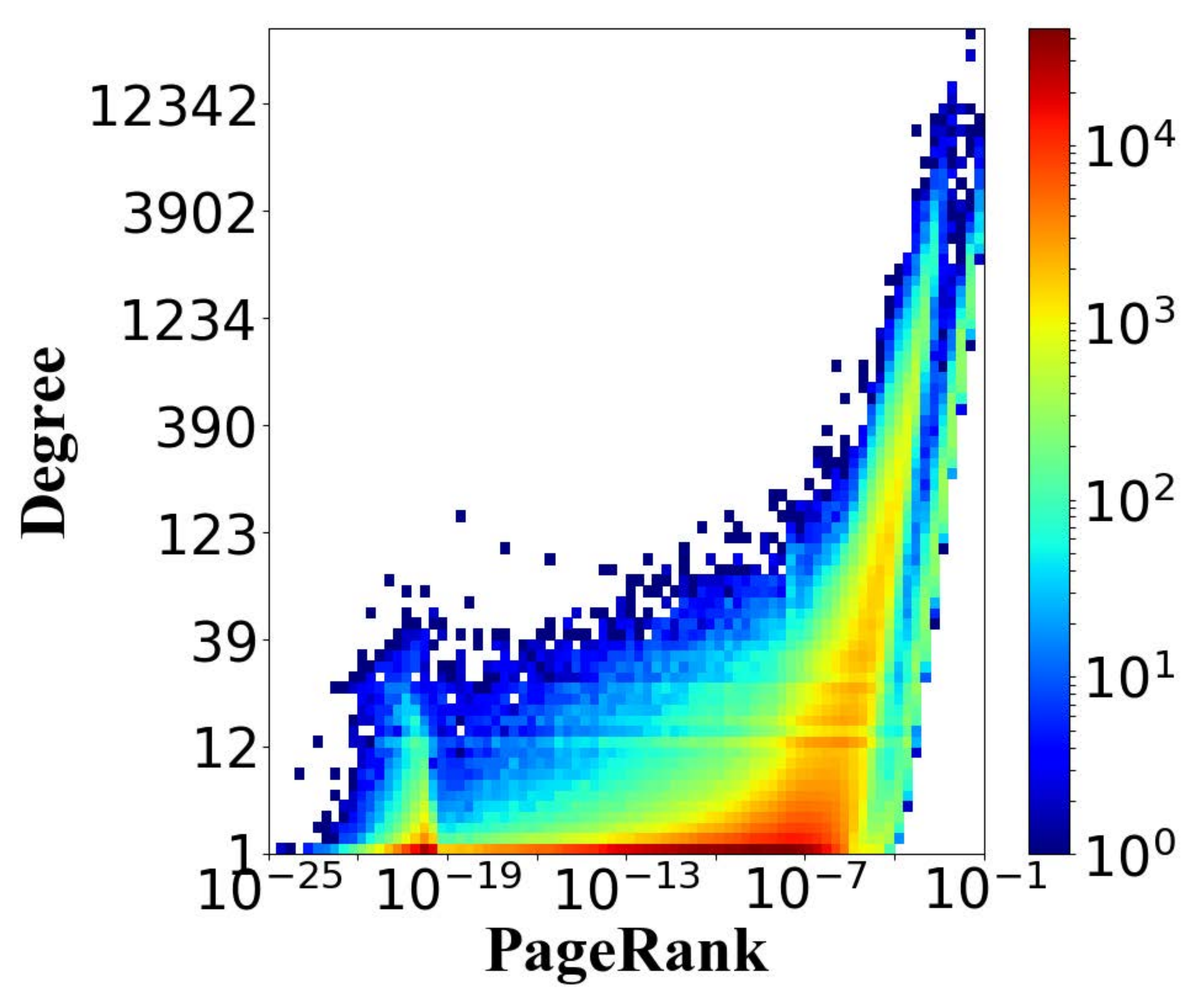}
		\caption{\centering{Degree vs. PageRank}}
		\label{fig:flickr_inaut}
	\end{subfigure}
	\begin{subfigure}[t]{0.11\linewidth}
		\centering
		\includegraphics[height=0.65in]{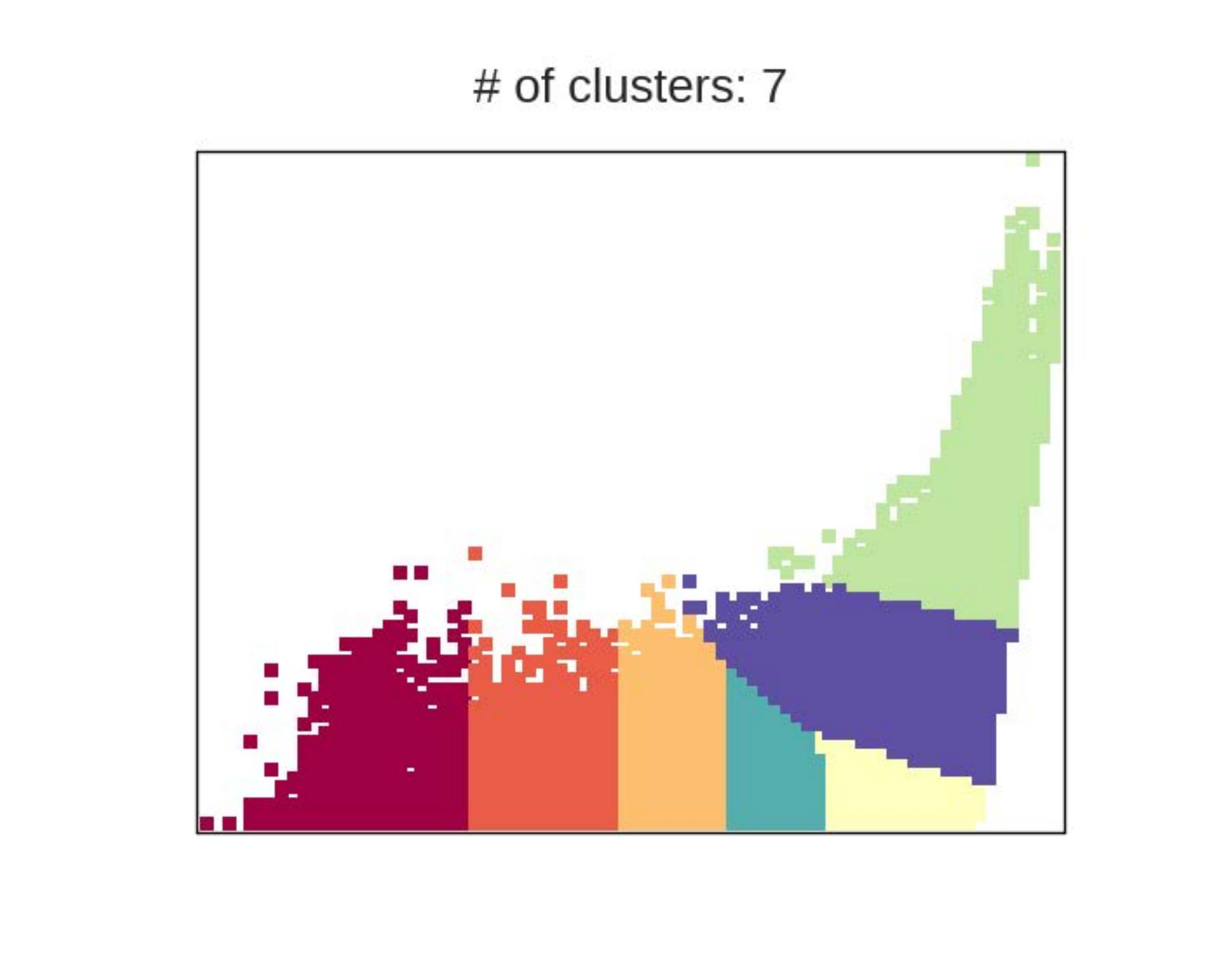}
		\caption{\centering{X-means \{5\}}}
		\label{fig:flickr_inaut_xmeans}
	\end{subfigure}
	\begin{subfigure}[t]{0.11\linewidth}
		\centering
		\includegraphics[height=0.65in]{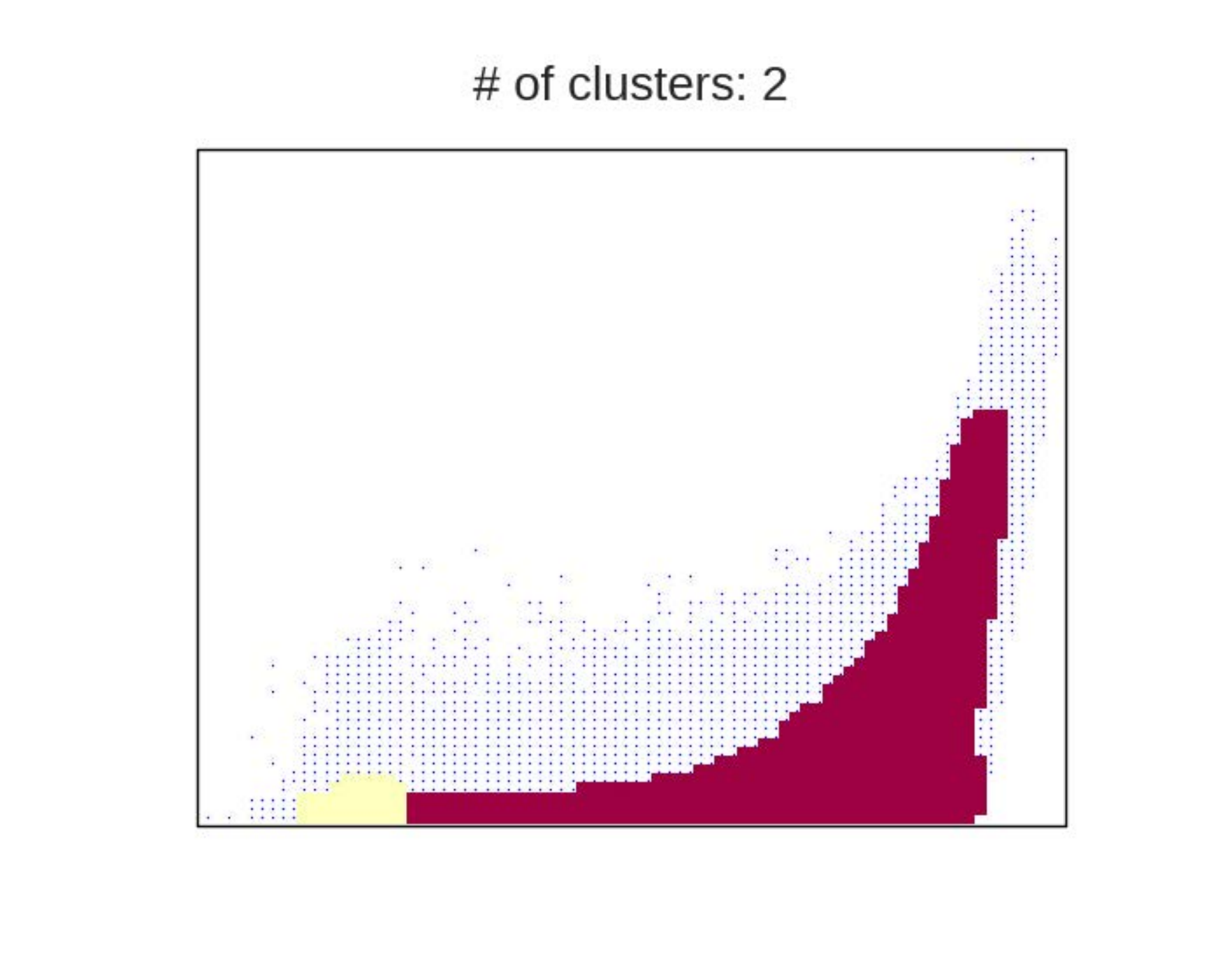}
		\caption{\centering{DBSCAN (\textbf{manual}) \{2\}}}
		\label{fig:flickr_inaut_dbscan}
	\end{subfigure}
	\begin{subfigure}[t]{0.11\linewidth}
		\centering
		\includegraphics[height=0.65in]{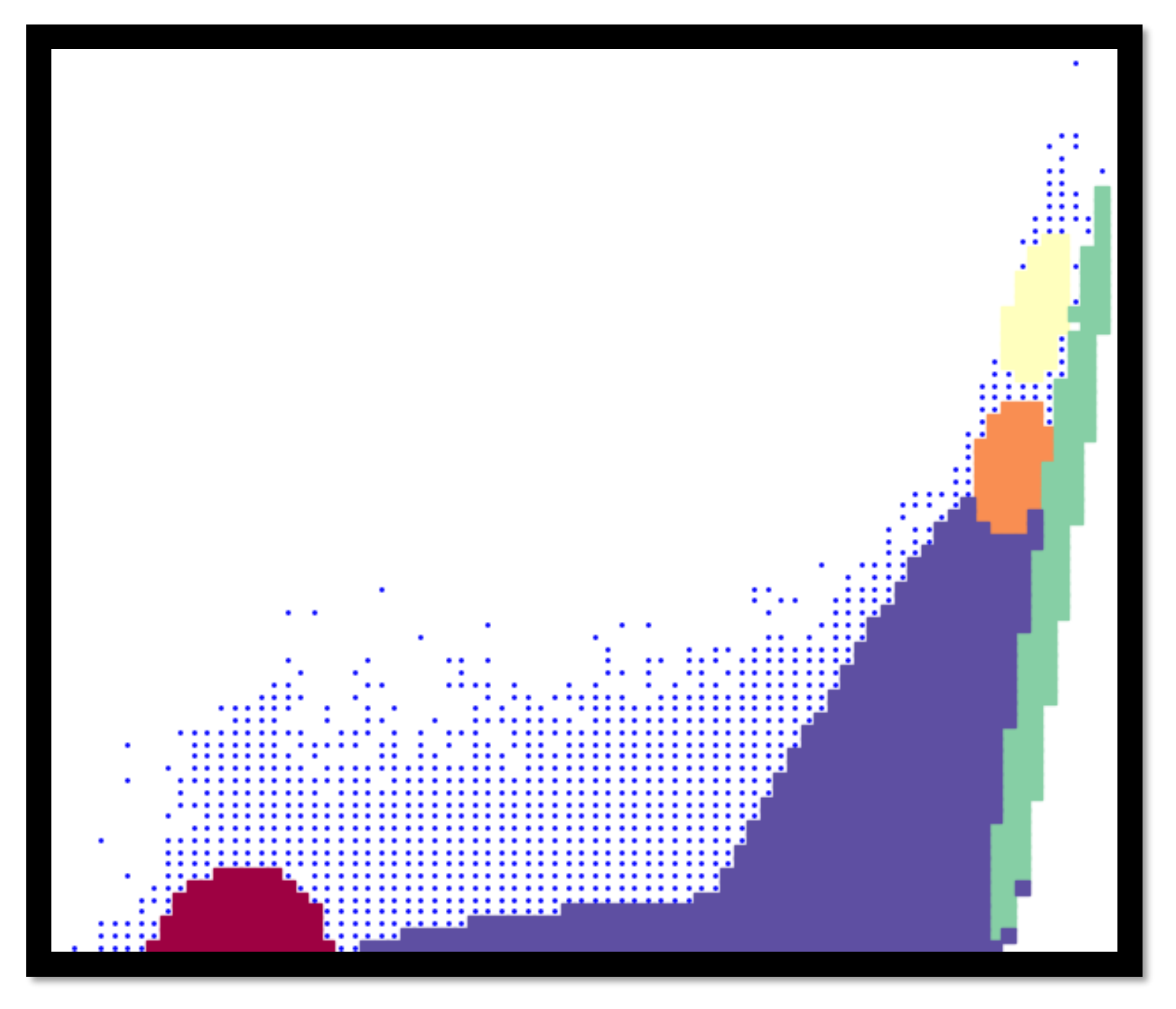}
		\caption{\centering{\eaglemine  \{5\}}}
		\label{fig:flickr_inaut_eagelmine}
	\end{subfigure}
	\vline
	\begin{subfigure}[t]{0.145\linewidth}
		\centering
		\includegraphics[height=0.67in]{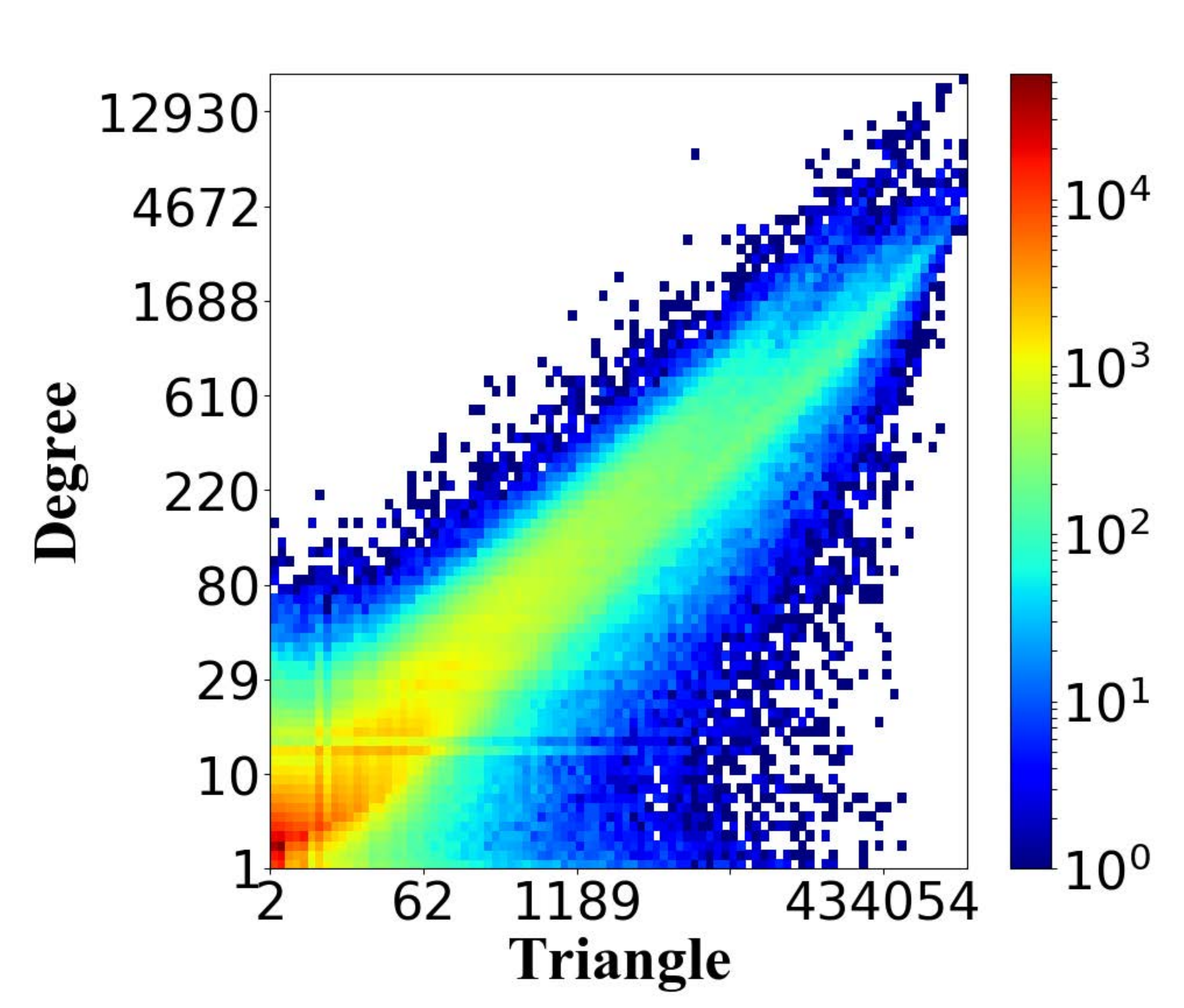}
		\caption{\centering{Degree vs. Triangle}}
		\label{fig:flickr_degtri}
	\end{subfigure}
	\begin{subfigure}[t]{0.11\linewidth}
		\centering
		\includegraphics[height=0.65in]{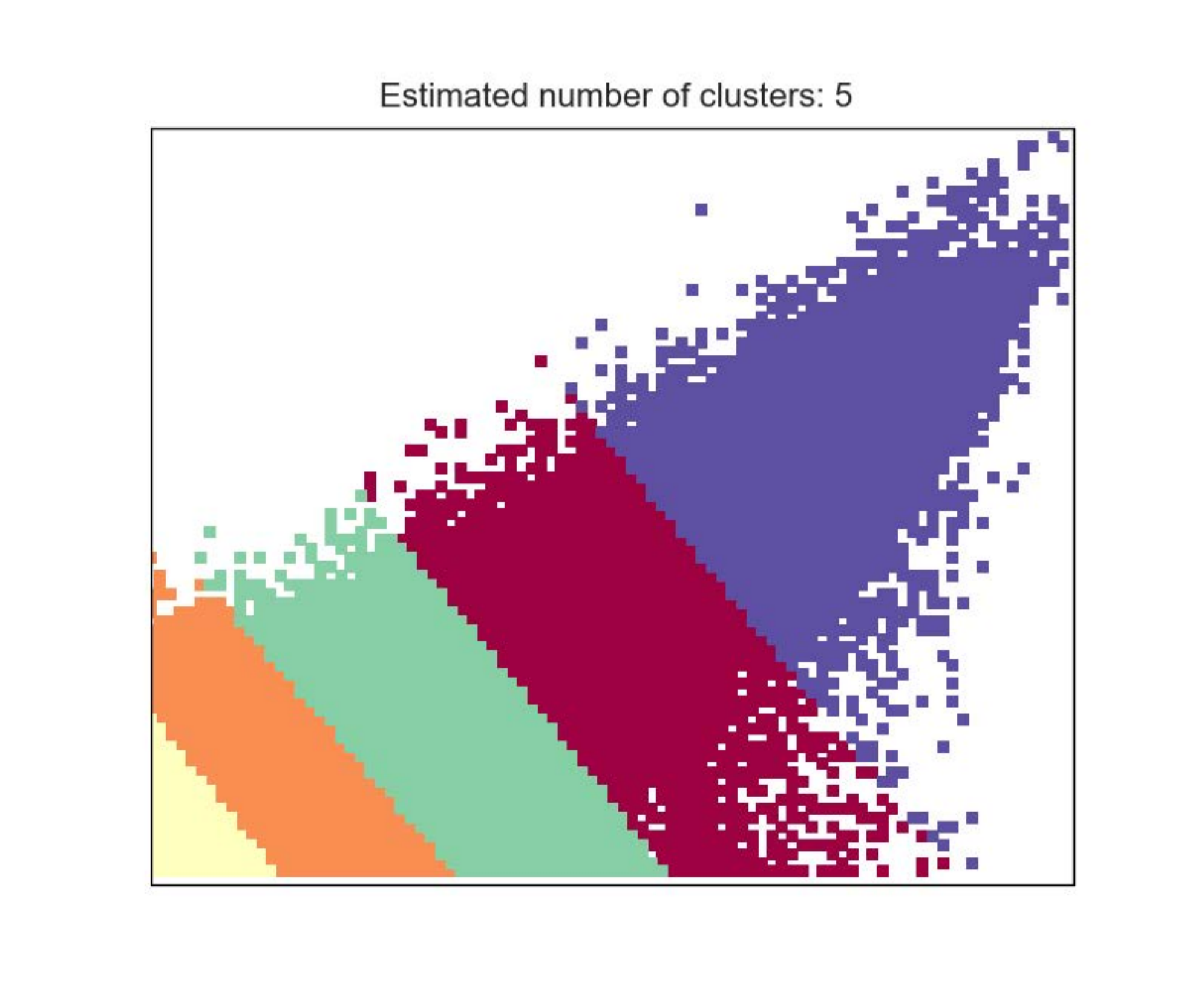}
		\caption{\centering{X-means \{5\}}}
		\label{fig:flickr_degtri_xmeans}
	\end{subfigure}
	\begin{subfigure}[t]{0.11\linewidth}
		\centering
		\includegraphics[height=0.65in]{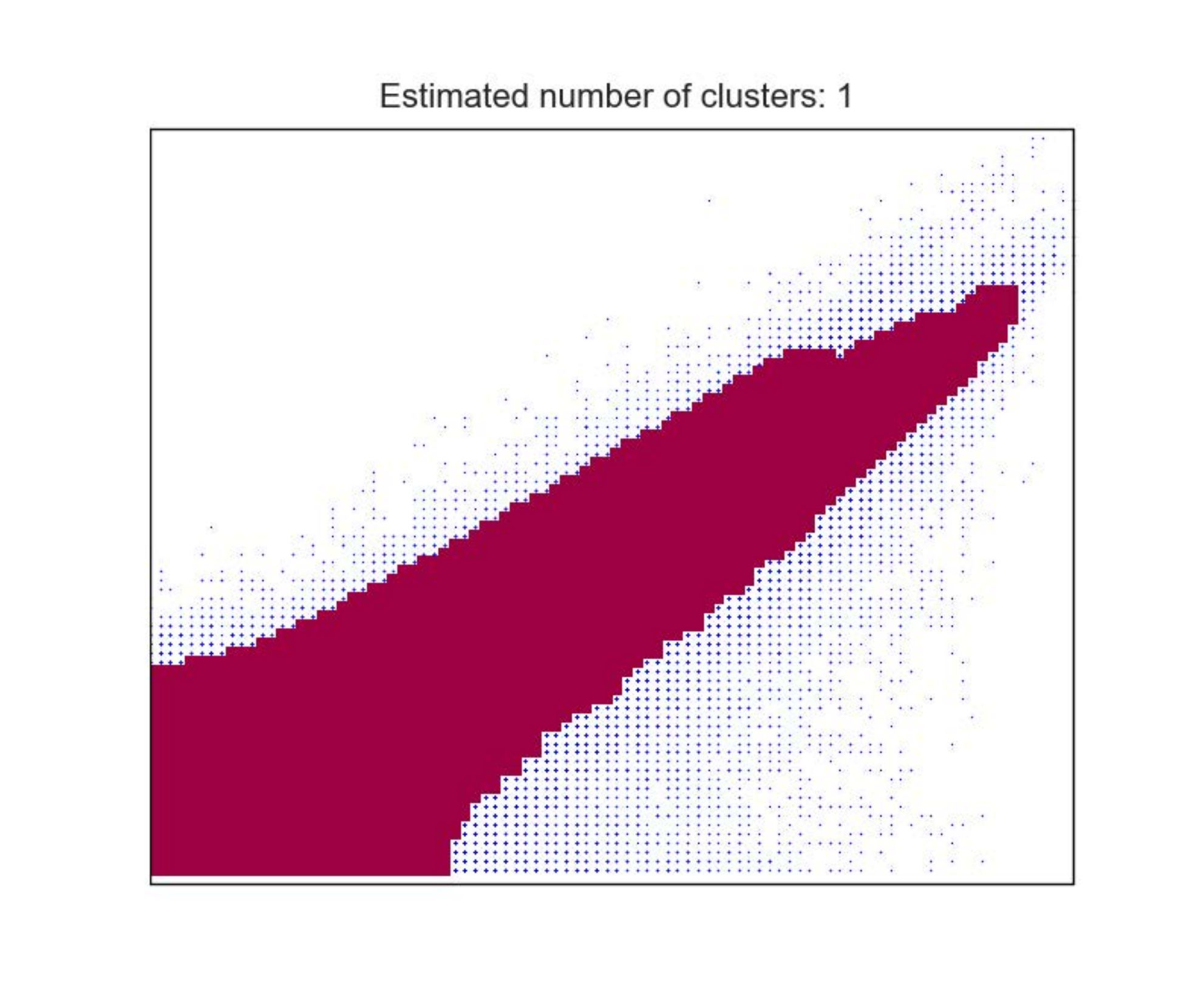}
		\caption{\centering{DBSCAN (\textbf{manual}) \{1\}}}
		\label{fig:flickr_degtri_dbscan}
	\end{subfigure}
	\begin{subfigure}[t]{0.11\linewidth}
		\centering
		\includegraphics[height=0.65in]{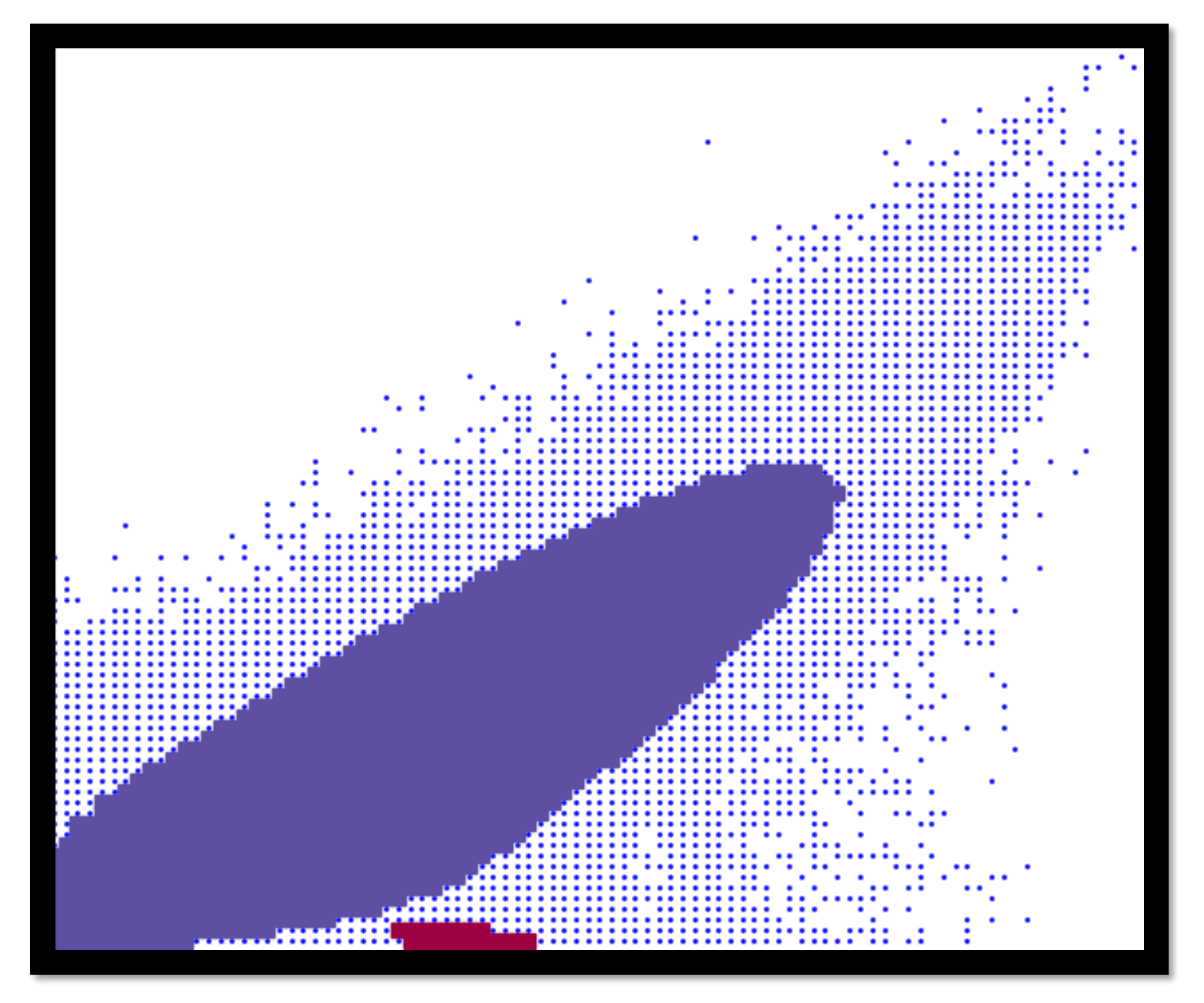}
		\caption{\centering{\eaglemine  \{2\}}}
		\label{fig:flickr_inaut_eagelmine}
	\end{subfigure}
	
	\begin{subfigure}[t]{0.145\linewidth}
		\centering
		\includegraphics[height=0.65in, width=1.0in]{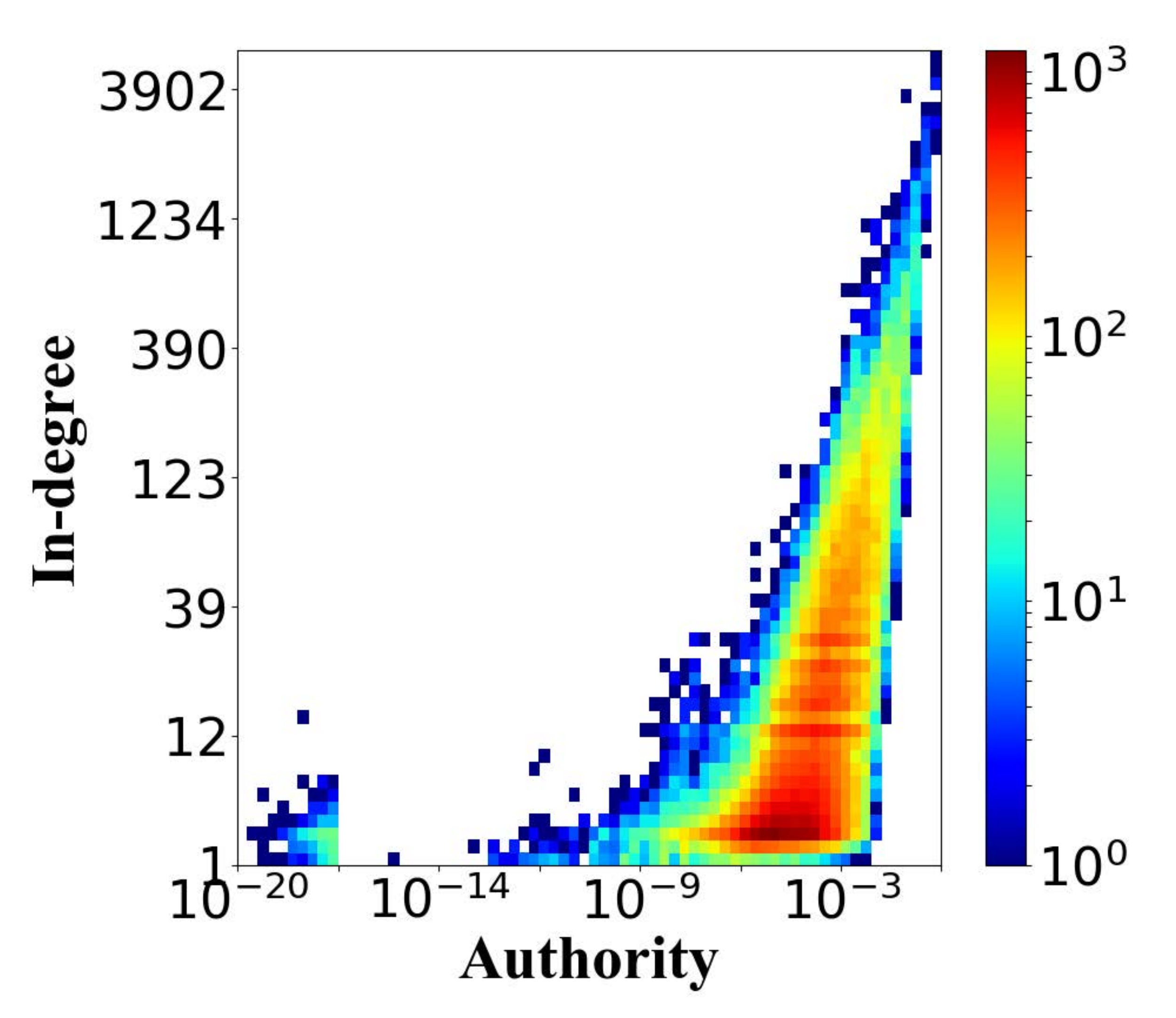}
		\caption{\centering{In-degree \,\, vs. Authority}}
		\label{fig:yelp_inaut}
	\end{subfigure}
	\begin{subfigure}[t]{0.11\linewidth}
		\centering
		\includegraphics[height=0.65in]{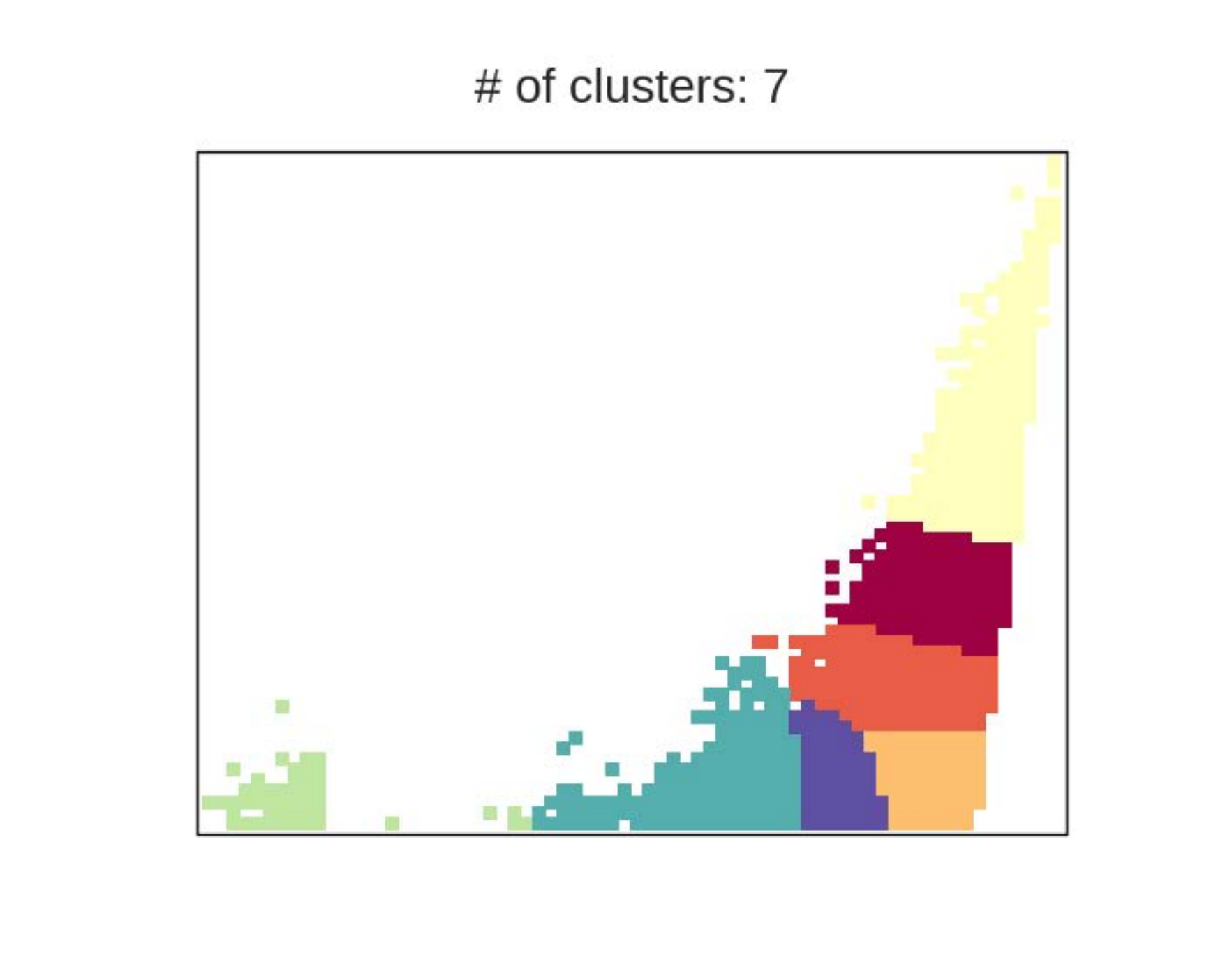}
		\caption{\centering{X-means \{7\}}}
		\label{fig:yelp_inaut_xmeans}
	\end{subfigure}
	\begin{subfigure}[t]{0.11\linewidth}
		\centering
		\includegraphics[height=0.65in]{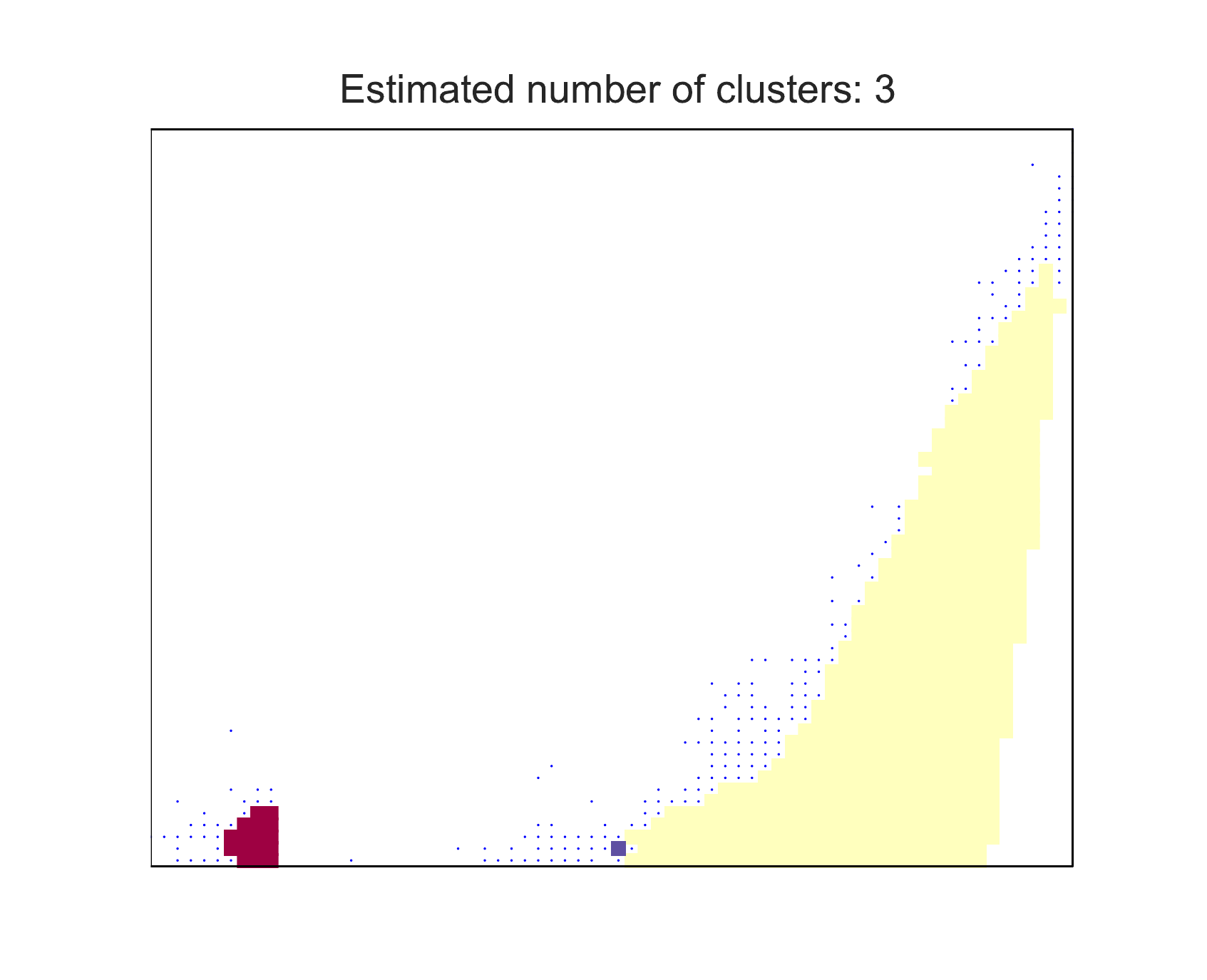}
		\caption{\centering{sting (\textbf{manual}) \{3\}}}        
		\label{fig:yelp_inaut_dbscan}
	\end{subfigure}\,
	\begin{subfigure}[t]{0.11\linewidth}
		\centering
		\includegraphics[height=0.65in]{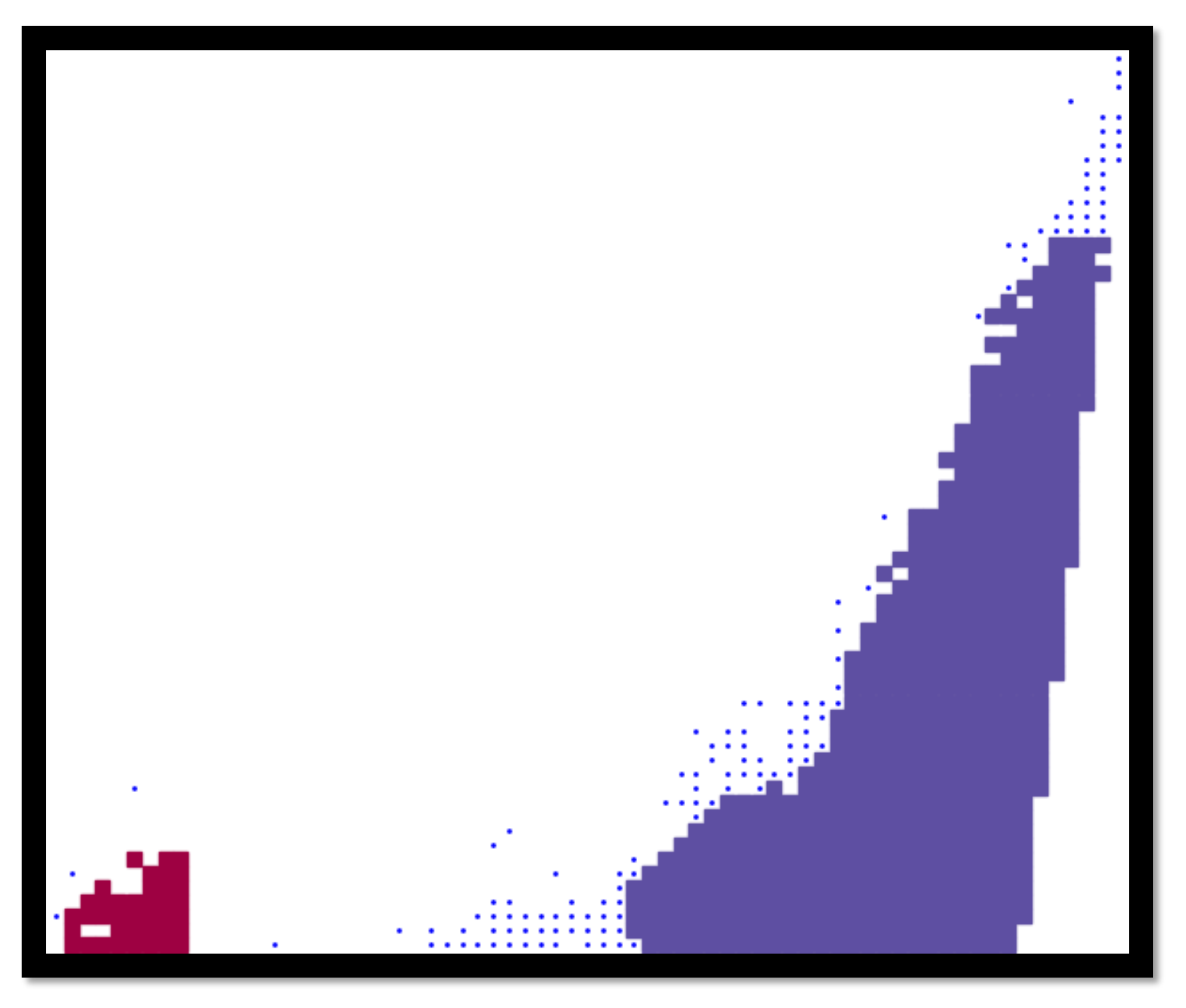}
		\caption{\centering{\eaglemine  \{2\}}}
		\label{fig:yelp_inaut_eaglemine}
	\end{subfigure}
	\vline 
	\begin{subfigure}[t]{0.145\linewidth}
		\centering
		\includegraphics[height=0.65in, width=1.0in]{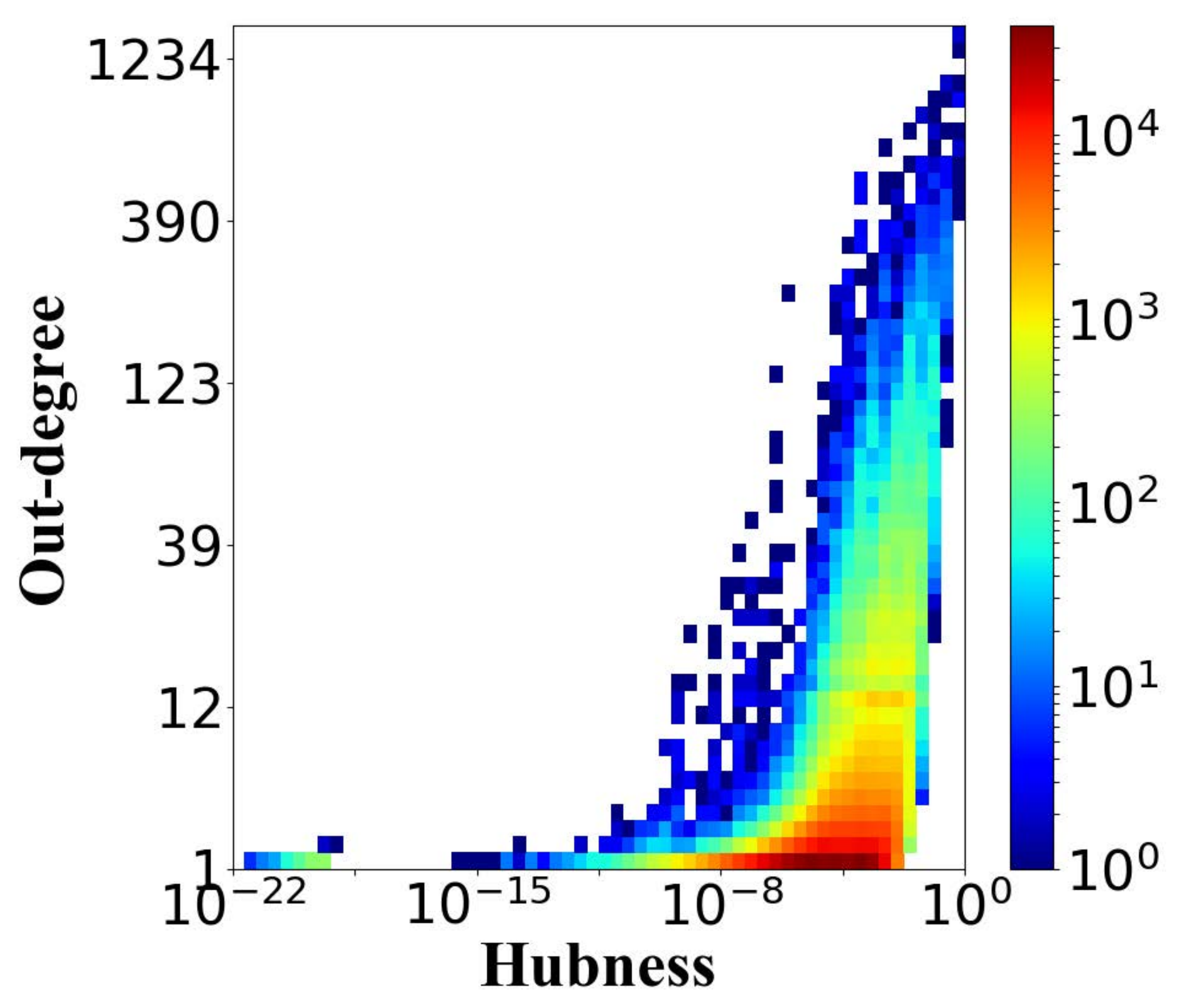}
		\caption{\centering{Out-degree \, vs. Hubness}}
		\label{fig:yelp_outhub}
	\end{subfigure}
	\begin{subfigure}[t]{0.11\linewidth}
		\centering
		\includegraphics[height=0.65in]{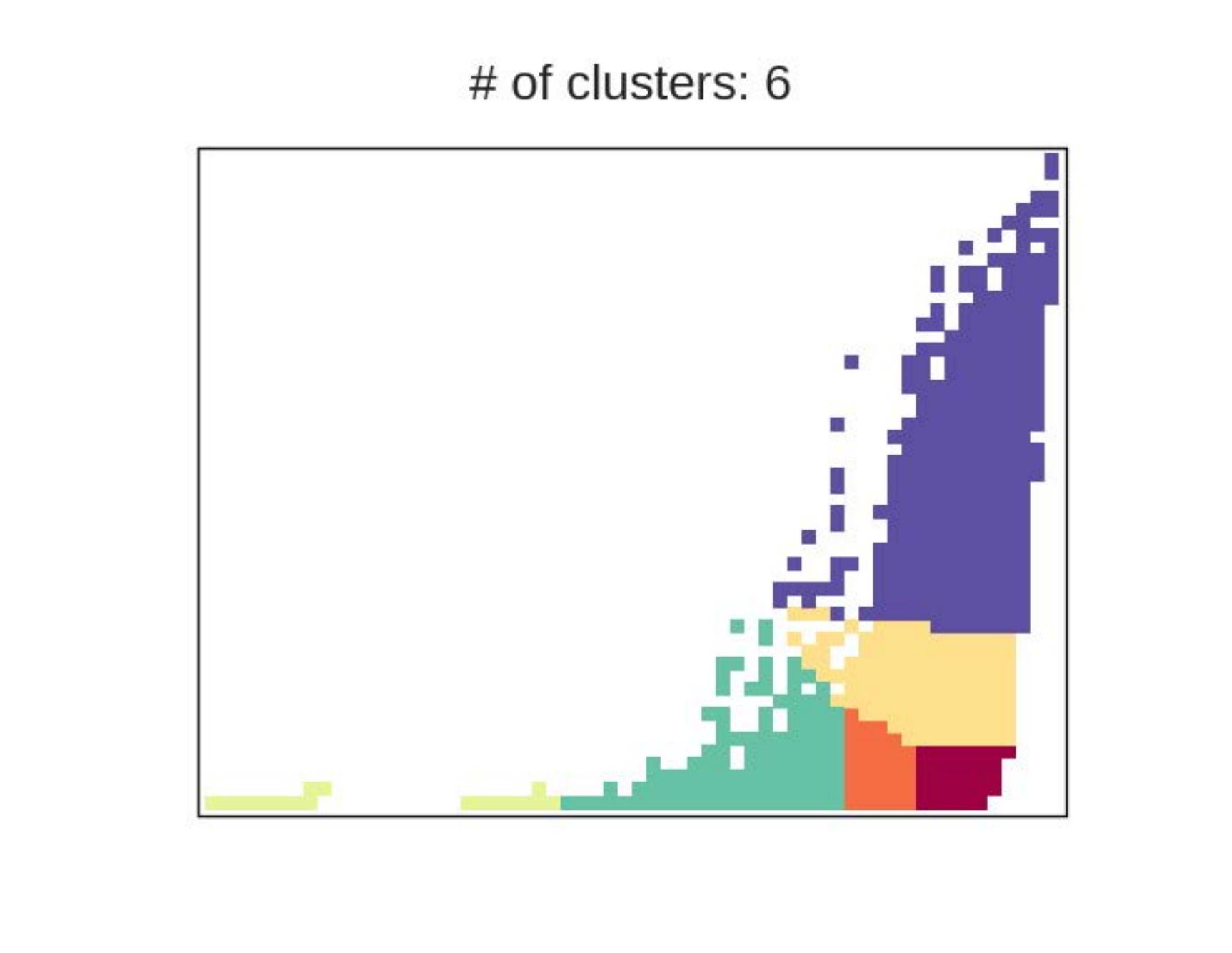}
		\caption{\centering{X-means \{6\}}}
		\label{fig:yelp_outhub_xmeans}
	\end{subfigure}
	\begin{subfigure}[t]{0.11\linewidth}
		\centering
		\includegraphics[height=0.65in]{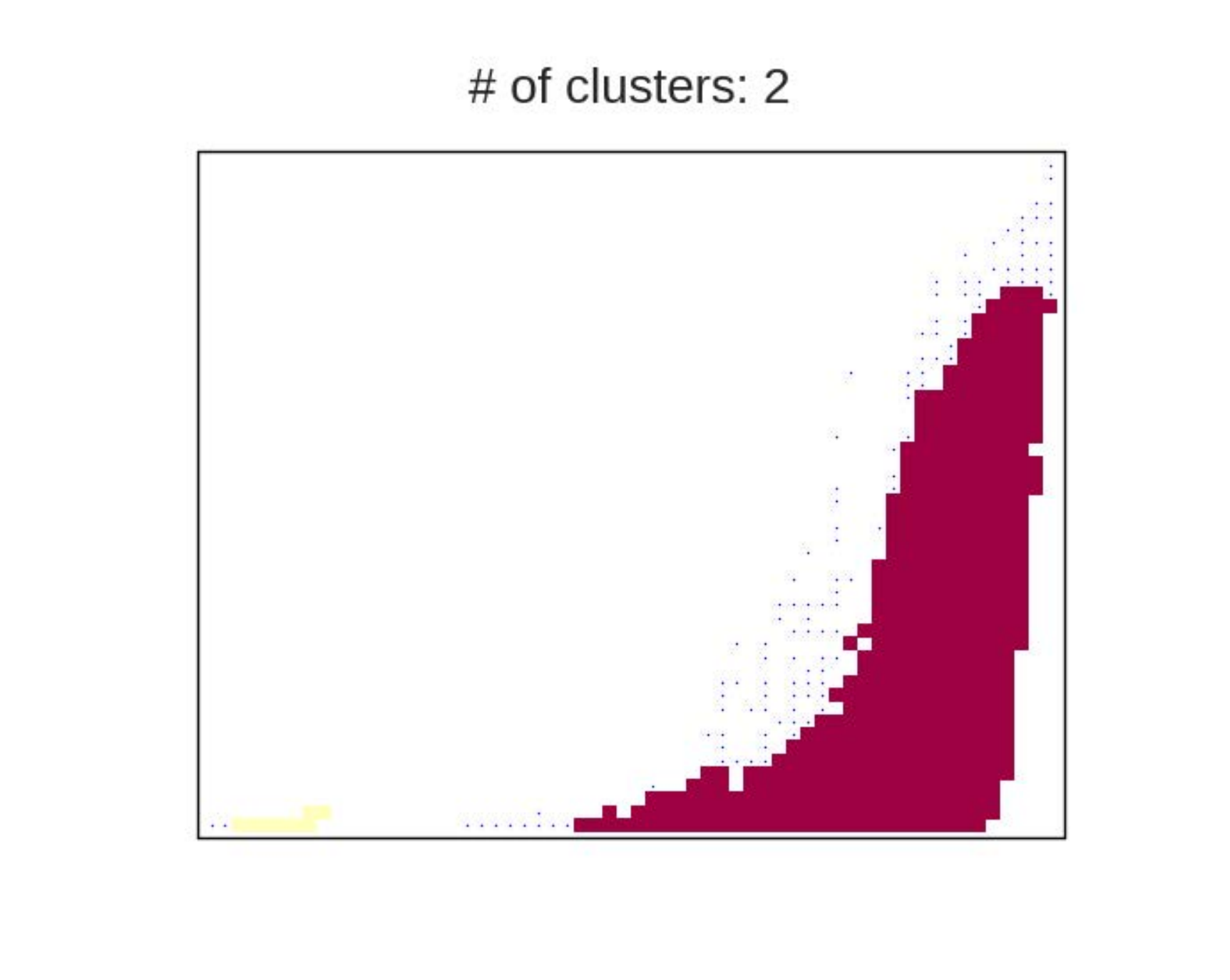}
		\caption{\centering{DBSCAN (\textbf{manual}) \{1\}}}
		\label{fig:yelp_outhub_dbscan}
	\end{subfigure}
	\begin{subfigure}[t]{0.11\linewidth}
		\centering
		\includegraphics[height=0.65in]{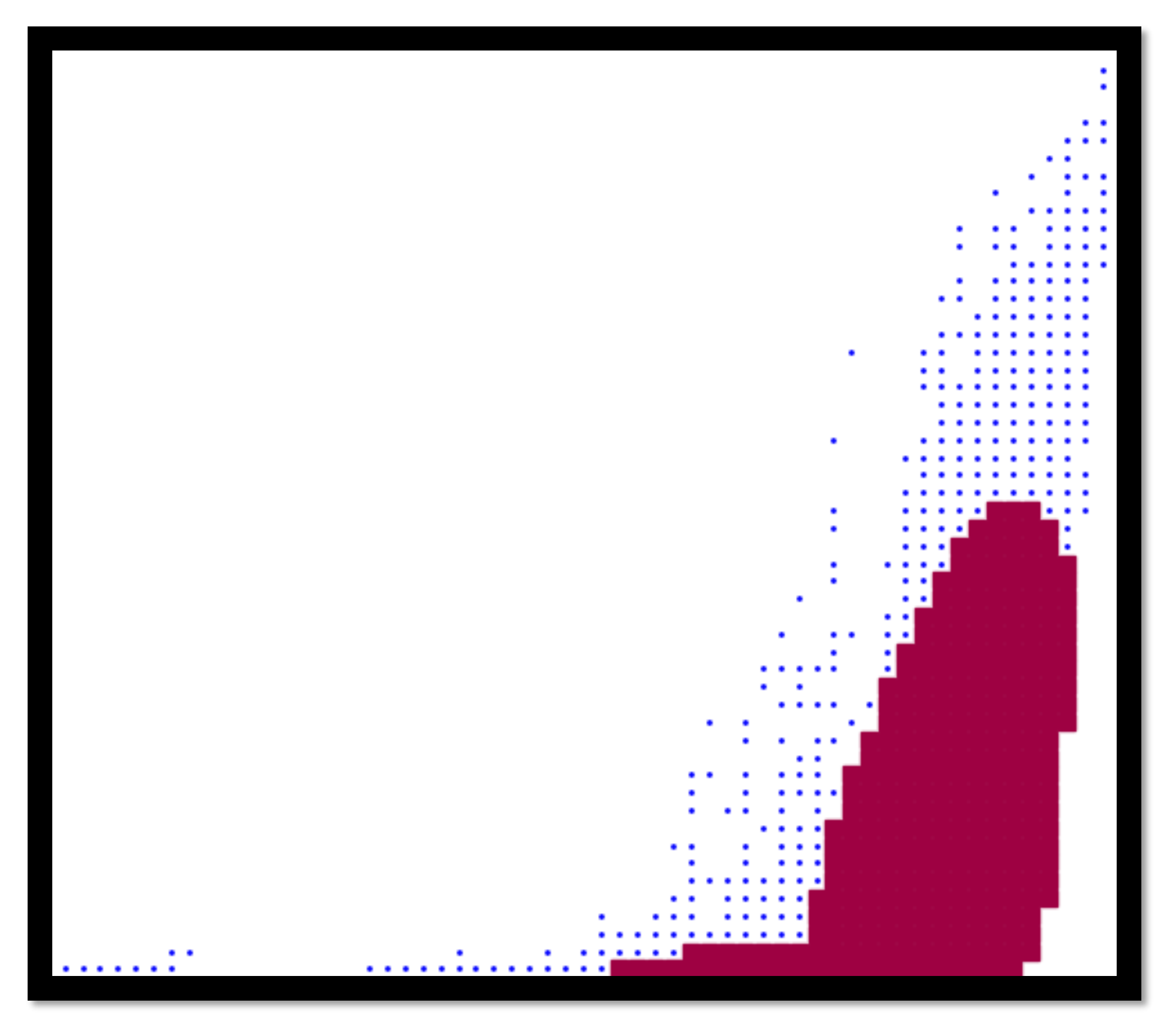}
		\caption{\centering{\eaglemine  \{1\}}}
		\label{fig:yelp_outhub_ealgemine}
	\end{subfigure}
	\vspace{-0.06in}
	\caption{{ Qualitative evaluation (vision-based)} on real-world datasets.
		\textbf{(a)-(d):} Sina weibo data.
		\textbf{(e)-(h):} Amazon movies data.
		\textbf{(i)-(p):} Flickr friendship data (Homogeneous graph);
		\textbf{(q)-(x):} Yelp products online review.
		The clusters found by \eaglemine agrees well with human vision.
		The number of clusters from each algorithm is in brackets 
		after the algorithm name.
	}
	\label{fig:vision_compare}
	\vspace{-0.21in}
\end{figure*}

We design experiments to answer following questions:
\begin{compactenum}[1.]
    \item \textbf{Quantitative evaluation on real data:}
          Does \eaglemine  give significant improvement in concisely 
          summarizing the graph? 
    \item \textbf{Qualitative evaluation (vision-based):} 
          Does \eaglemine accurately detect \mcls that agree with human vision?
    \item \textbf{Anomaly detection:} 
          Does \eaglemine detect suspicious patterns in real world graphs?
    \item \textbf{Scalability:} 
          Is \eaglemine scalable with regard to the data size?
\end{compactenum}

We test \eaglemine on a variety of real world datasets,
whose details are offered in Table \ref{table:exp_data}.

\begin{table}[h]
	\vspace{-0.1in}
	\setlength{\abovecaptionskip}{1pt} 
	\centering
	\caption{Graph datasets used in our experiments.}
	\resizebox{\columnwidth}{!}{%
		\begin{tabular}{ l | l  l  l }
			& \# of nodes &  \# of edges & Content \\ \hline
			Amazon rating\cite{konect:amazon-ratings}    & (2.14M, 1.23M) & 5.84M & Rate \\
			Android\cite{McAuley:2015}             		 & (1.32M, 61.27K) & 2.64M & Rate \\
			BeerAdvocate~\cite{mcauley2013amateurs}  	 & (33.37K, 65.91K) & 1.57M & Rate \\
			Flickr\cite{mcauley2012image}        		 & (2.30M, 2.30M) & 33.14M & Friendship \\
			Yelp\cite{yelp:yelp_data}                    & (686K, 85.54K) & 2.68M & Rate \\
			Youtube\cite{Mislove:2007}            		 & (3.22M, 3.22M) & 9.37M & Who-follow-who\\
			Sina weibo & (2.75M, 8.08M)           		 & 50.1M & User-retweet-msg \\
		\end{tabular}
	}
	\label{table:exp_data}
	\vspace{-0.22in}
\end{table}

\subsection{Q1. Quantitative Evaluation on Real Data}
\label{sect:quan_eval}
Envisioning the problem of clustering as a compression problem,
we use Minimum Description Length (MDL) as the metric to 
measure the conciseness of histogram summarization.
The best result $\mathcal{M}$ will minimize 
$L = L_{\mathcal{M}} + L_{\epsilon}$,
where $L_\mathcal{M}$ is the length in bits of 
description of the model, 
and $L_{\epsilon}$ is the description length of differences (errors)
between observed data and model expectation.
For the histogram $\mathcal H$, denote the model expectation as
$\tilde{h}$ for the number of node $h$ in grid $g$,
and with the description errors that can be decoded from
$L_\epsilon$, the original data can be accurately recovered as 
$h = \tilde{h} + \epsilon$.  
We extend Elias codes~\cite{elias1975universal,bohm2006robust} 
by encoding the sign with an extra bit, denoted as $\omega(\cdot)$.
Then we encode the total description error as
$L_\epsilon = \sum_{g} {\omega(h-\tilde h)}$.

Therefore, the total code length of \eaglemine is:
\begin{equation}
\setlength{\abovedisplayskip}{-0.01in}
\setlength{\belowdisplayskip}{-0.01in}
    \label{eqmdl}
    L = \omega(K) + \sum_i^{K} (5l_c + \omega(N_i)) + L_e
\end{equation}
where $l_c$ is the fixed code length for floating-point
number\footnote{$l_c$ = $32$ for floating-point numbers 
in our experimental setting.}
$K$ is the number of distributions used in the 
summary\footnote{$K$ = $2$ for mixture DTM Gaussian, otherwise $K = 1$.}
and each DTM Gaussian needs 5 free parameters. 
$N_i$ is the number of samples for distribution $i$.

We compared our summarization of \mcls with other
clustering algorithms:  X-means, G-means, 
DBSCAN\footnote{Since the performance of DBSCAN has 
already been manually tuned, we do not use OPTICS to 
search parameters for DBSCAN.},
and \eaglemine with digitized multivariate Gaussian distribution 
as vocabulary term, denoted as \eaglemine(DM).
In particular, G-means hierarchically split each cluster into
two based on hypothesis test of Gaussian distribution, 
so we viewed it as a summarization of clusters with the Gaussian
distribution. Hence, we first estimated the mean and covariance
for each cluster from G-means, 
and then calculated MDL as equation (\ref{eqmdl}).
As for other baselines, we calculated the MDL of clustering 
results as\cite{Chakrabarti:2004, bohm2006robust}.

The parameters of baselines were set as following
\begin{compactenum}[$\bullet$]
	\item X-means: initialized with $k$-means ($k=5$).
	\item G-means: $max\_depth=5$, which meant having 
	      16 clusters at most, to avoid too many fine-grained clusters.
	      P-value was $0.01$ which was not sensitive.
	\item DBSCAN: Eps=$1$, and used Manhattan distance. 
	      We searched MinPts from the mean of $h_i\in\mathcal H$ 
	      to the upper bound by a step of 50, and manually chose 
	      the best clusters.
\end{compactenum} 

The comparison of MDL is reported in Figure~\ref{fig:mdl_performance}.
We can see that \eaglemine achieves the shortest description length, 
indicating a concise histogram summarization.
On average, \eaglemine reduces the MDL code length more than 
$42.9\%, 36.4\%, 17.6\%$ compared with 
X-means, DBSCAN, and G-means respectively.
Our \eaglemine also outperforms \eaglemine (DM) without truncation.
Therefore, \eaglemine performs the best for histogram summarization 
with DTM Gaussian and overlapped distributions.

\begin{figure*}[t]
	\centering
	\begin{subfigure}[t]{0.3\linewidth}
		\centering
		\includegraphics[height=1.3in]{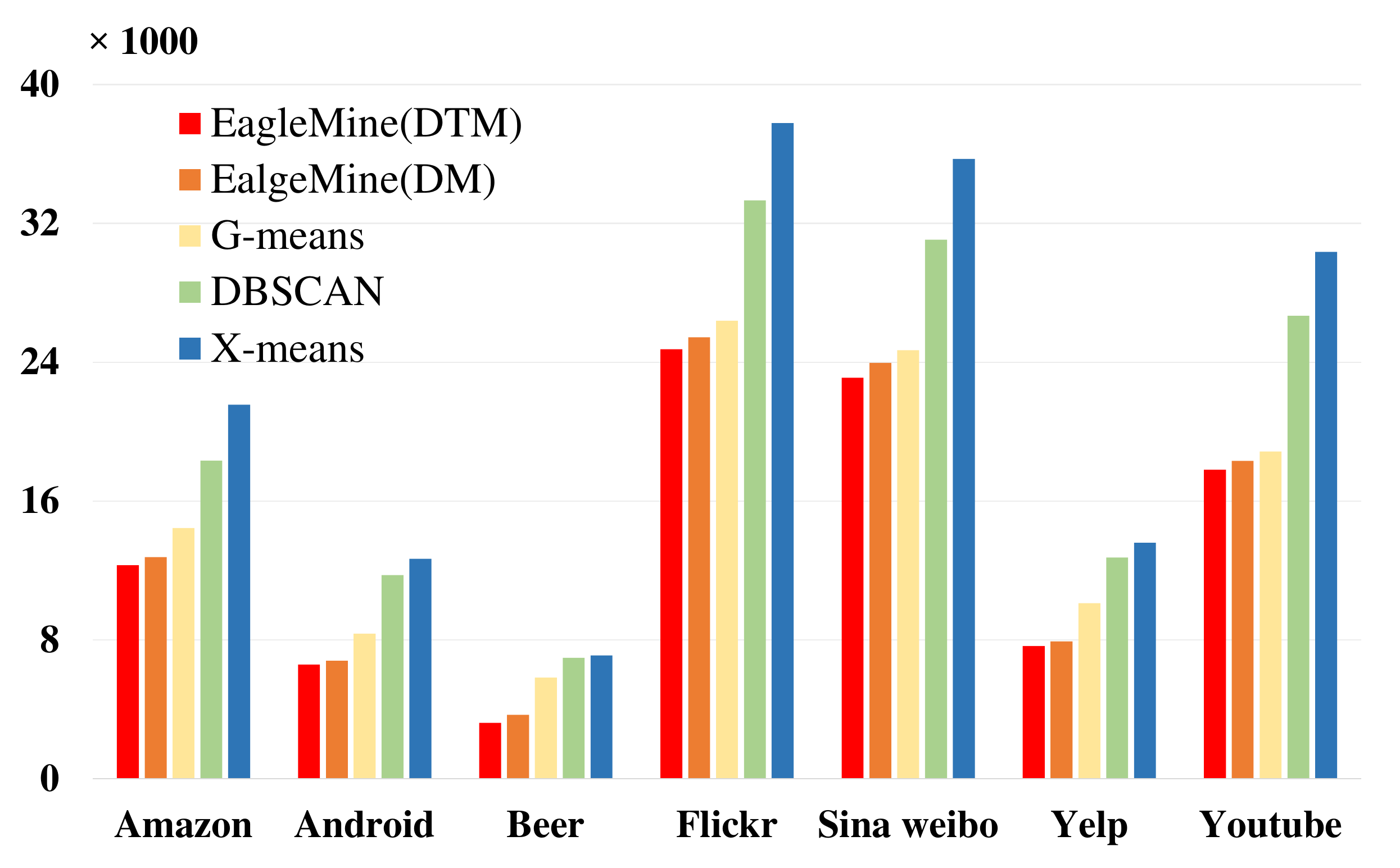}
		\caption{\centering MDL Quantitative Evaluation} 
		\label{fig:mdl_performance}
	\end{subfigure}
	\begin{subfigure}[t]{0.21\linewidth}
		\centering      	
		\includegraphics[height=1.3in]{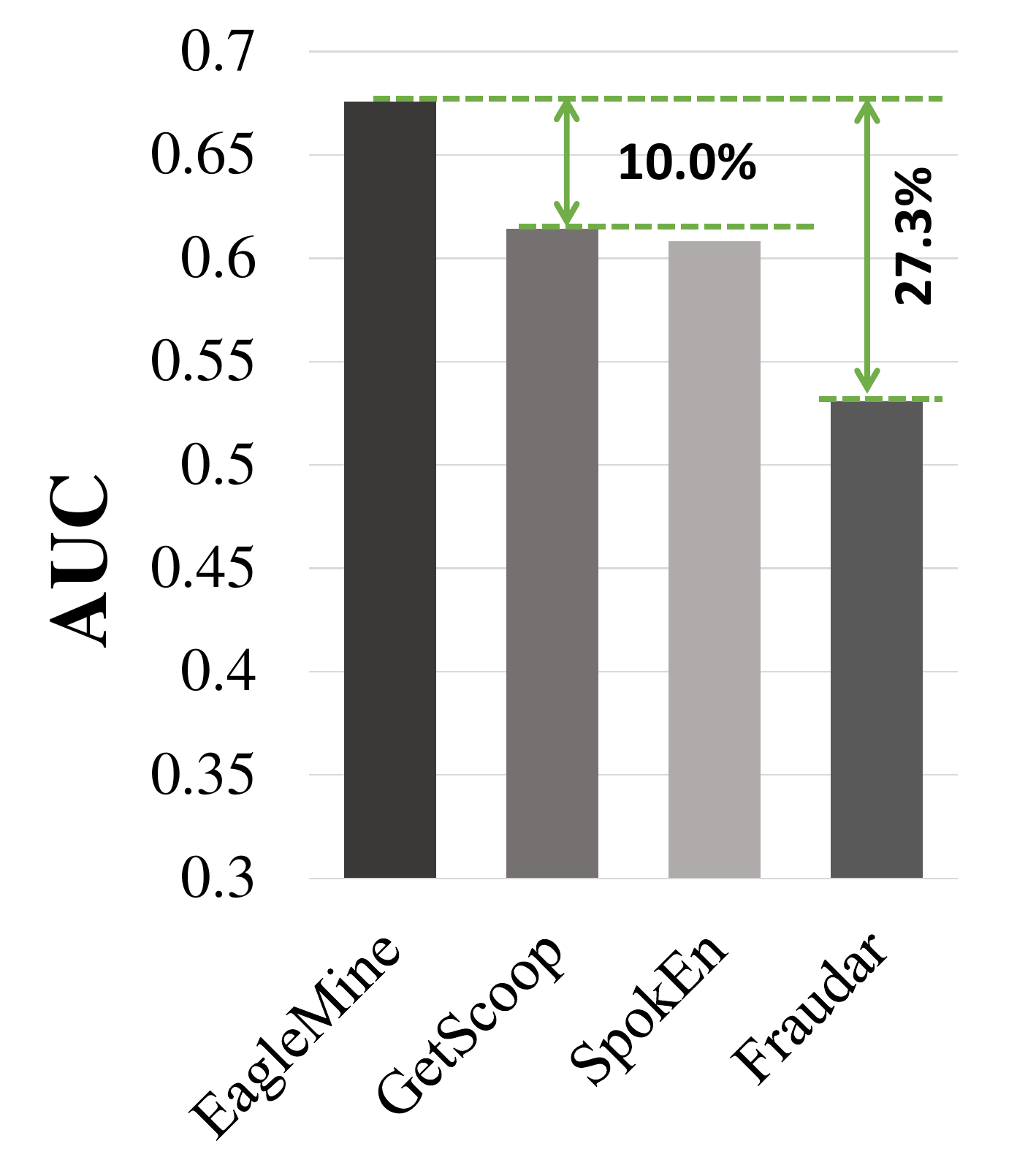}
		\caption{\centering AUC for suspicious users}
		\label{fig:auc_user}
	\end{subfigure}
	\begin{subfigure}[t]{0.23\linewidth}
		\centering   		
		\includegraphics[height=1.3in]{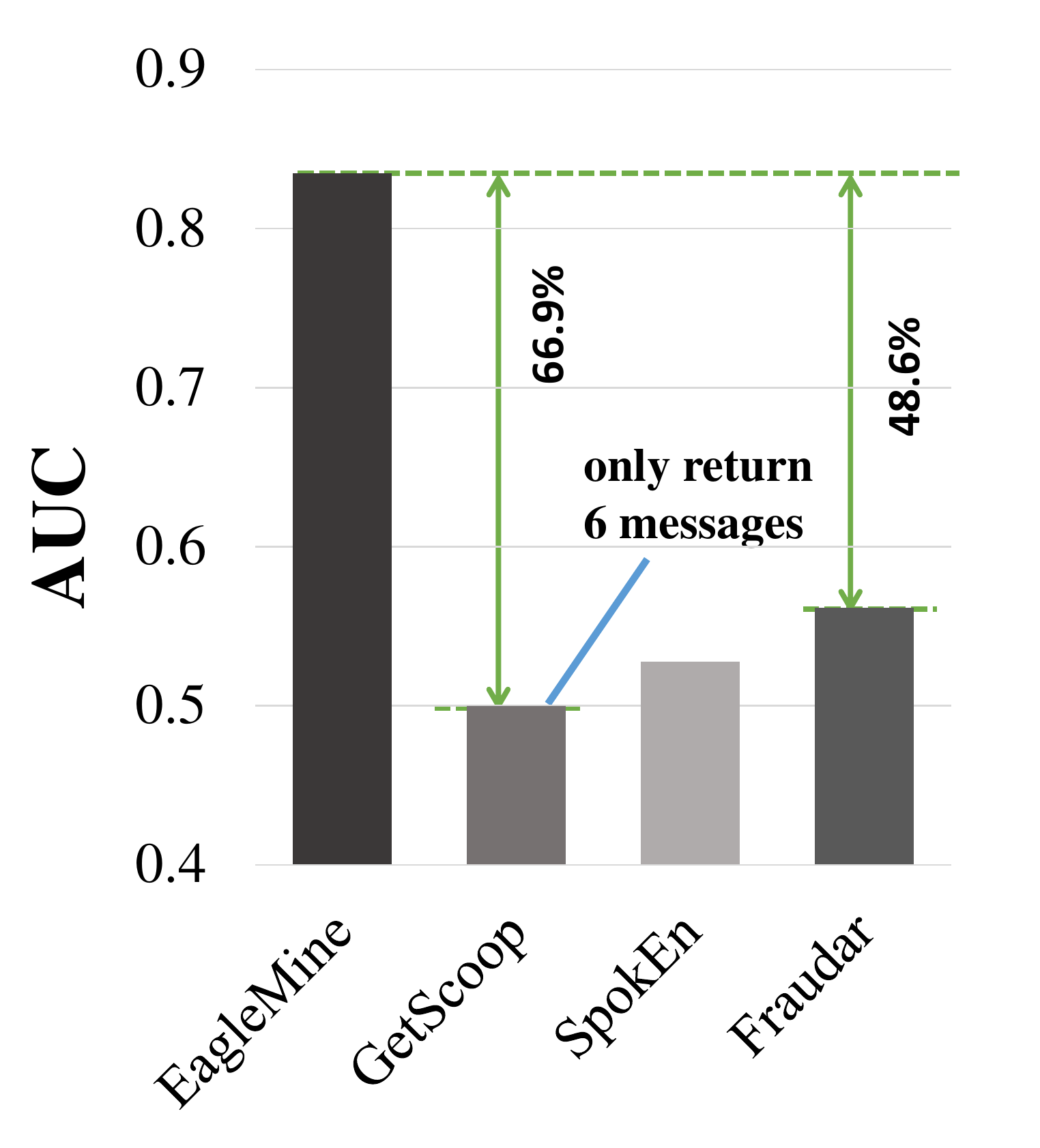}
		\caption{\centering AUC for suspicious messages}
		\label{fig:auc_msg}
	\end{subfigure}
	\begin{subfigure}[t]{0.21\linewidth}
		\centering
		\includegraphics[height=1.3in]{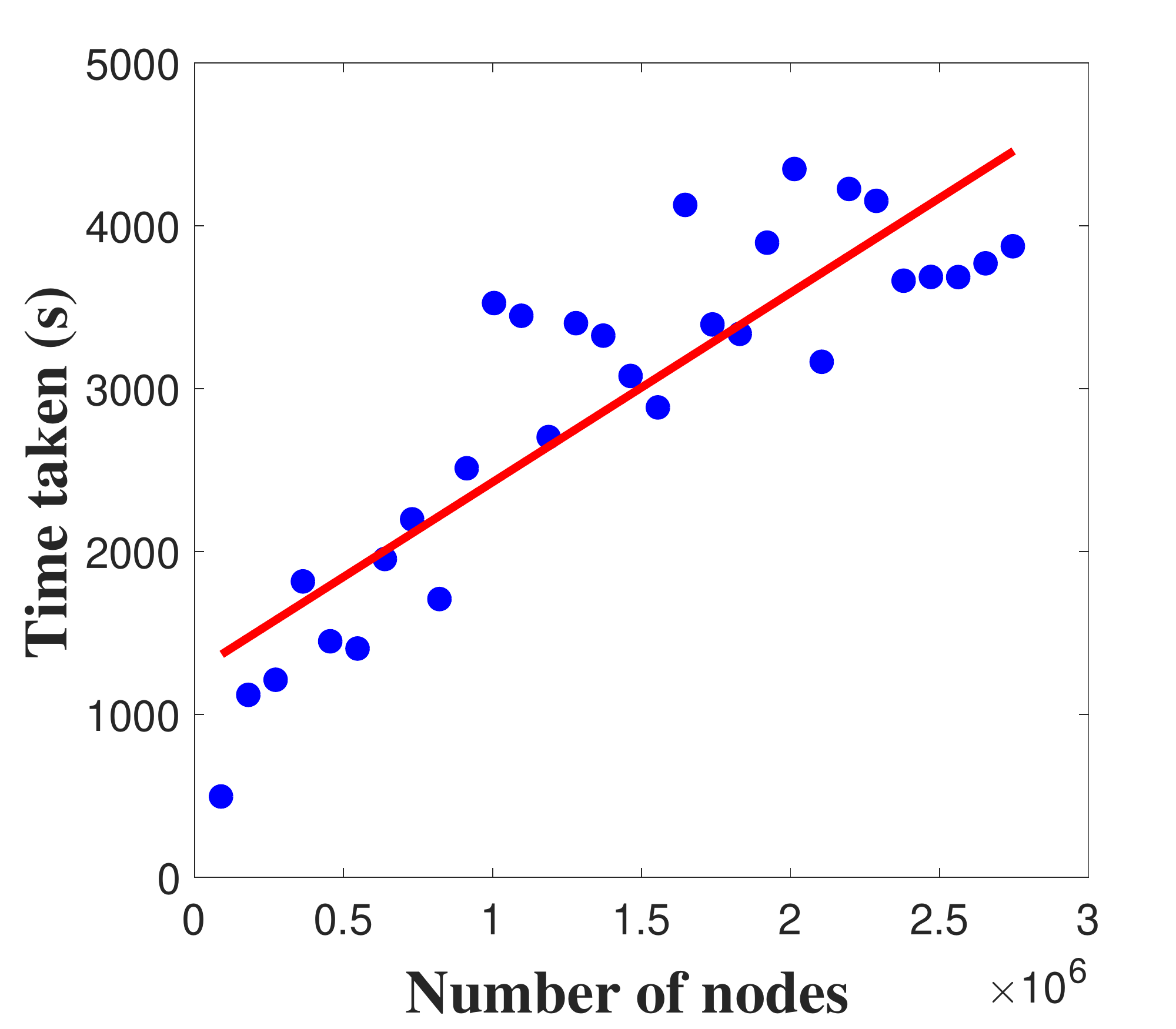}
		\caption{\centering  \eaglemine running time}
		\label{fig:run_time_fit}
	\end{subfigure}
	\vspace{-0.1in}
	\caption{
	    \textbf{\eaglemine Performance.}
		(a) MDL is compared on different real-world datasets. 
		    \eaglemine achieves the shortest description code length, 
			 which means concise summarization, and outperforms all 
		     other baselines.
		(b) and (c) \eaglemine has the best AUC for detecting suspicious 
		    users and messages on Sina weibo.
		(d) The curve \emph{blue} shows the running time of \eaglemine, 
		    compared to a linear function. }
	\label{fig:performace}
	\vspace{-0.24in}
\end{figure*}

\subsection{Q2. Qualitative evaluation (vision-based)}
\label{sect:qual_eval}

In this part, we visually compare the clustering results from all methods.
Based on the summarization $\pmb C$ of \Eaglemine,
the grids can be assigned to the distribution in $\pmb C$ and
labeled as $\pmb A$. With $\pmb A$ and outliers,
\eaglemine can be compared with other clustering algorithms.

Figure\ref{fig:vision_compare} exhibits 
the results of X-means, DBSCAN, STING, and \eaglemine
on some histograms of out-degree vs. hubness, 
in-degree vs. authoritativeness, 
degree vs. page-rank, and degree vs. triangle.
Excluding outliers labeled with \emph{blue} color on plots,
the number of clusters found by each method is listed in the bracket 
after corresponding algorithm name.
From the figures, it will be naturally expected that grids with similar
color (density) and near location should be as one cluster, and denser 
parts should be covered as many as possible.
The results show that X-means produces a number of poor quality clusters, 
as it divides the triangle-like island into pieces, 
which should be as one cluster, into different spherical b
locks in a way inconsistent with vision-based judgment.

Although manually tuned DBSCAN captures dense regions,
it overlooks some important micro-clusters as
Figure\ref{fig:wbeaglemine} shown.
Moreover, the performance of DBSCAN relies on extensive manual work to 
pick a good result.
Our \eaglemine gives the best results for mining clusters,
which is better than density-based clustering methods,
and also captures some small micro-clusters and outliers,
coinciding with human visual intuition.
Moreover, \eaglemine provides cluster descriptions using distributions,
which summarizes data with statistical characteristics.
Of course, \eaglemine does not yield these micro-clusters or outliers
if the features under-study do not show clustering structure 
as in Figure\ref{fig:yelp_outhub_ealgemine}.

The STING algorithm is equivalent to DBSCAN in an 
extreme condition\cite{wang1997sting}.
It hence achieves very similar result to DBSCAN (see Figure\ref{fig:wbsting}).
Besides, it does not separate the three micro-clusters and the majority
cluster in red color.
Therefore, our \eaglemine gives better results for clustering 
and has more explainable results as manual recognition.
It is worth noticing that some small micro-clusters captured by 
\eaglemine are interesting patterns or suspicious groups.

\subsection{Q3. Anomaly detection}
In this section, we analyze the anomaly patterns in \mcls
of Sina Weibo dataset, which is a user-retweet-message graph.
The data was collected through the whole month of Nov 2013.
The statistics information 
is listed in Table~\ref{table:exp_data}.

Surprisingly, we found a cluster of users who had suspicious name
patterns. As Figure\ref{fig:sina_outdhub_cases} shows,
these accounts share several prefixes, 
e.g. 182 usernames with prefix ``best'',
and 178 usernames with prefix ``black''.
Moreover, since those users gathered in a cluster with small
covariance, they had similar out-degree 
$123 \pm 5$ (i.e.~number of retweets),
and similar hubness, indicating that they retweeted the same messages,
or messages with equal popularity.
Based on the analysis, these users are possibly bots with synchronized
behaviors\footnote{Since it is hard to crawl all the data from Sina Weibo,
we cannot verify this conjecture. But those users are 
still active in the online system.}.

To compare the performance for anomaly detection,
we labeled these nodes, both users and messages, 
from the results of baselines, and sample nodes of our 
suspicious clusters of \eaglemine 
considering it is impossible to label all the nodes.
Our labels were based on the following rules\cite{hooi2016fraudar}:
\textit{1} user-accounts/messages which were deleted from the online 
system (system operators found the suspiciousness)
\footnote{The status is checked in May 2017 with API of Sina Weibo.};
\textit{2} users with specific login-names pattern and other suspicious signals 
e.g.~approximately the same sign-up time, friends and followers count as discussed above;
\textit{3} messages talking about advertisement or retweeting promotion.

In total, we labeled 5,474 suspicious users and 4,890 suspicious messages.
We compared with state-of-the-art fraud detection algorithms 
GetScoop\cite{jiang2014inferring}, \textsc{SpokEn}\cite{prakash2010eigenspokes},
and Fraudar\cite{hooi2016fraudar}, 
and the results are reported in Figure~\ref{fig:auc_user}~\ref{fig:auc_msg}.
We sorted our suspicious nodes in the descendant order of 
hubness or authority scores. 
The AUC (the area under the ROC curve) was used as the metric.
The results show that \eaglemine achieves more than $10\%$ 
improvement in suspicious user detection, and about $50\%$ 
improvement in suspicious message detection, 
which outperforms the baselines.
As Figure\ref{fig:wbeaglemine} shows, the anomalous users detected
by Fraudar and SpokEn only fall in cluster \textcircled{1},
while \eaglemine detects suspicious users from other clusters as well. 
Simply put, the baselines miss most of the anomalies in clusters 
\textcircled{2} and \textcircled{3}, while \eaglemine catches them.

Finally, in Figure~\ref{fig:wbeaglemine},
we also found that the bottom left cluster consists of an isolated network, 
in which users retweeted different messages from those retweeted 
by other users out of network. 
In other words, users in the network did not ``connect'' to 
the outside users via messages in the bipartite graph (user-retweet-message).

\subsection{Q4. Scalability}
Figure~\ref{fig:run_time_fit}
shows the near-linear scaling of \Eaglemine's running time 
in the numbers graph nodes. 
Here we used Sina Weibo dataset, 
and subsampled the nodes to generate different size graphs.

\section{Conclusion}
\label{sec:conclusions}
We propose a tree-based approach \eaglemine to 
mine and summarize all node groups in a 
histogram plot of a large graph.
The \eaglemine algorithm finds optimal clusters 
based on water-level tree and hypothesis test, 
describes them with a configurable model vocabulary, 
and detects some suspicious.
\eaglemine has desirable properties:
\begin{compactitem}
	\item \textbf{Automated summarization:} 
	  Our algorithm automatically summarizes a given histogram 
	  with vocabulary of distributions, inspired by human vision to 
	  find the graph node groups and outliers. 
	\item \textbf{Effectiveness:}
	  We compared \eaglemine on real data with the baselines, the result shown 
	  that our detection is consistent with human vision, achieves better MDL 
	  in summarization, and has better accuracy in anomaly detection.
	\item \textbf{Scalability:}
	 \eaglemine algorithm runs near linear in the number of graph nodes.
\end{compactitem}

%
%
%
\bibliographystyle{IEEEtran}
\balance
\bibliography{EagleMine}

\end{document}